\documentclass[11pt, reqno]{amsart}

\IfFileExists{mymtpro2.sty}{%
  \usepackage[subscriptcorrection]{mymtpro2}
}{}

\usepackage{a4, upgreek, amssymb, amsmath}

\usepackage{appendix}

%

\marginparwidth1.5cm
\marginparsep.5cm


\newtheorem{theorem}{Theorem}[section]
\newtheorem{lemma}{Lemma}[section]
\newtheorem{corollary}{Corollary}[section]
\newtheorem{proposition}{Proposition}[section]
\theoremstyle{definition}
\newtheorem{remark}[theorem]{Remark}
\newtheorem{remarks}[theorem]{Remarks}


\newcommand{\labelnummer}{\mbox{\normalfont (\roman{numcount})}}%

\makeatletter

  {\let\curlabelspeicher\@currentlabel%
    \begin{list}{\labelnummer}%
      {\usecounter{numcount}\leftmargin0pt%
        \topsep0.5ex\partopsep2ex\parsep0pt\itemsep0ex\@plus1\p@%
        \labelwidth2.5em\itemindent3.5em\labelsep1em%
      }%
    \let\saveitem\item%
    \def\item{\saveitem%
      \def\@currentlabel{{\upshape\curlabelspeicher}$\,$\labelnummer}}%
    \let\savelabel\label%
    \def\label##1{\savelabel{##1}%
      \@bsphack%
        \ifmmode\else%
          \protected@write\@auxout{}%
          {\string\newlabel{##1item}{{\labelnummer}{\thepage}}}%
        \fi%
      \@esphack%
    }%
  }{\end{list}}%

\renewcommand{\appendix}{\def\thesection{\textsc{Appendix}}}


 \let\leq\le
 \let\geq\ge

 \let\Im\undefined

\DeclareMathOperator{\Im}{Im}

\DeclareMathOperator{\tr}{tr\kern1pt}

%

\makeatletter

\newif\ifper\pertrue
\def\per{.}

\def\bti{\@ifnextchar[\bbti\bbbti}
\def\bbti[#1]#2{#2, #1.}
\def\bbbti#1{#1.}

\def\z{\@ifnextchar[\zz\zzz}
\def\zz[#1]#2#3#4#5{\perfalse\emph{#2} \textbf{#3}, #4 (#5) [#1]}
\def\zzz#1#2#3#4{\emph{#1} \textbf{#2}, #3 (#4)\ifper\per\fi\pertrue}

\def\pub{\@ifstar\pubstar\pubnostar}
\def\pubnostar{\@ifnextchar[\@@pubnostar\@pubnostar}
\def\@@pubnostar[#1]#2#3#4{#2, #3, #4, #1\ifper\per\fi\pertrue}
\def\@pubnostar#1#2#3{#1, #2, #3\ifper\per\fi\pertrue}
\def\pubstar[#1]#2#3#4{\perfalse #2, #3, #4 [#1]\pertrue}

\makeatother

 \sloppy

\topmargin -1cm
\textheight21.4cm
\textwidth15.7cm
\oddsidemargin 0.5cm
\evensidemargin 0.5cm
\parindent1.0cm


\newcommand{\beq}{\begin{equation}}
\newcommand{\eeq}{\end{equation}}
\newcommand{\ba}{\begin{array}}
\newcommand{\ea}{\end{array}}
\newcommand{\bea}{\begin{eqnarray}}
\newcommand{\eea}{\end{eqnarray}}


\newcommand{\R}{\mathbb{R}}

\newcommand{\Z}{\mathbb{Z}}

\newcommand{\N}{\mathbb{N}}

\newcommand{\C}{\mathbb{C}}


\def\P{I\kern-.30em{P}}
\def\E{I\kern-.30em{E}}
\renewcommand{\E}{\mathbb{E}\mkern2mu}
\renewcommand{\P}{\mathbb{P}}

\newcommand{\Schr}{Schr\"odinger}


{

\begin{document}

\title[Local eigenvalue statistics for random band matrices]{On the local eigenvalue statistics for random band matrices in the localization regime}

\author[P.\ D.\ Hislop]{Peter D.\ Hislop}
\address{Department of Mathematics,
    University of Kentucky,
    Lexington, Kentucky  40506-0027, USA}
\email{peter.hislop@uky.edu}

\author[M.\ Krishna]{M.\ Krishna}
\address{Ashoka University, 
Plot No 2, Rajiv Gandhi Education City,
Rai, Haryana 131029, India}
\email{krishna.maddaly@ashoka.edu.in}


\begin{abstract}
We study the local eigenvalue statistics $\xi_{\omega,E}^N$ associated with the eigenvalues of one-dimensional, $(2N+1) \times (2N+1)$ random band matrices with independent, identically distributed, real random variables and band width growing as $N^\alpha$, for $0 < \alpha < \frac{1}{2}$. We consider the limit points associated with the random variables $\xi_{\omega,E}^N [I]$, for $I \subset \R$, and $E \in (-2,2)$. For Gaussian distributed random variables with $0 \leq \alpha < \frac{1}{7}$, we prove that this family of random variables has nontrivial limit points 
for almost every $E \in (-2,2)$, and that these limit points are Poisson distributed with positive intensities. The proof is based on an analysis of the characteristic functions of the random variables $\xi_{\omega,E}^N [I]$ and associated quantities related to the intensities, as $N$ tends towards infinity, and employs known localization bounds of \cite{schenker, psss}, and the strong Wegner and Minami estimates \cite{psss}. Our more general result applies to random band matrices with random variables having absolutely continuous distributions with bounded densities. Under the hypothesis that the localization bounds hold for 
$0 < \alpha < \frac{1}{2}$, we prove that any nontrivial limit points of the random variables $\xi_{\omega,E}^N [I]$ are distributed according to Poisson distributions.
\end{abstract}

\maketitle \thispagestyle{empty}

\tableofcontents



\section{Random band matrices: Statement of the problem and main results}\label{sec:rbm1}
\setcounter{equation}{0}

A  random band matrix (RBM) in one dimension, $H_L^N$, of size $2N+1$ and band width $W := 2L + 1$, for an integer $L = \lfloor N^\alpha \rfloor$,  
with $0 \leq \alpha \leq 1$, is a $(2N+1)\times (2N+1)$ real, symmetric matrix
defined through its matrix elements as
\bea\label{eq:RBMdefn1}
\langle e_i,  H^N_L e_j \rangle & = & \frac{1}{\sqrt{2L+1}}\left\{\begin{array}{ccc}
\omega_{ij} & {\rm if} & |i-j| \leq L \\
0  & {\rm if} & |i-j| > L\end{array}\right. ,
\eea
with
$$
-N \leq i,j \leq N .
$$
The results here also hold for periodic band matrices for which the norm in \eqref{eq:RBMdefn1} is replaced by periodic norm $|i-j |_1$. 

The real random variables $\omega_{ij}= \omega_{ji}$ within the band are independent and identically distributed (iid) up to symmetry. The random variables are assumed to have mean zero, variance one, and finite moments. 
These include the most common case of a Gaussian distribution for which we assume $\omega_{ij} \in \mathcal{N} (0,1)$. 

The normalization in \eqref{eq:RBMdefn1} is chosen so that the variances $\sigma_{ij} := 
\E \{ |\langle e_i,  H^N_L e_j \rangle |^2 \}$, satisfy 
$$ 
\sum_{j = -L}^{L} \sigma_{ij} = 1 = \sum_{i = -L}^{L} \sigma_{ij}.
$$
That is, the sum of the variances in each row and in each column is equal to one. 

We denote by $\{ E_j^N(\omega) \}_{j=\N}^N$ the set of the $2N+1$ eigenvalues of $H_\omega^N$.
The local  eigenvalue statistics (LES) is defined with respect to the rescaled eigenvalues of $H_\omega^N$ defined by $\widetilde{E}_j(\omega) := N( E_j^N(\omega) - E_0)$ for any $E_0 \in (-2,2)$. The LES centered at $E_0$ is the weak limit as $N \rightarrow \infty$ of the process
\beq\label{eq:les1}
\xi_\omega^N (ds) := \sum_{j=-N}^N \delta (   N( E_j^N(\omega) - E_0) - s) ~ds.
\eeq

\subsection{Density of states}\label{subsec:dos1}

For $0 < \alpha < 1$, the integrated density of states (IDS) for RBM is given by the semi-circle law:
\beq\label{eq:scl1}
{N}_{sc} (E) = \frac{1}{2 \pi} \int_{- 2}^E ~ \sqrt{ ( 4 - s^2)_+} ~ds, ~~ E \in [-2,2] .
\eeq
For $\alpha = 0$, the IDS is not semi-circle but has a remainder behaving like $\mathcal{O}(W^{-1})$. These results were proved by Bogachev, Molchanov, and Pastur \cite{bmp}, using the method of moments, and in   Molchanov, Pastur, and Khorunzhi\u{i} \cite{mpk}, using Green's functions. The proof for the case of case of $\alpha = 1$ is due to Wigner, and we refer the reader to \cite{mehta}.
The density of states function (DOSf) is given by
\beq\label{eq:scl2}
n_{sc}(E) =  \frac{1}{2 \pi} ~ \sqrt{ ( 4 - E^2)_+}, ~~ E \in [-2,2] .
\eeq
For a measurable subset $J \subset \R$, the semi-circle measure of $J$ is denoted by 
\beq\label{eq:scl3}
N_{sc}(J) = \int_J n_{sc}(s) ~ds .
\eeq


\subsection{Conjectures for the local eigenvalue statistics of RBM}\label{subsec:conjectures1}

There are two main conjectures about the behavior of the LES for RBM as the exponent $\alpha$ varies $0 \leq \alpha \leq 1$:
\begin{itemize}
\item Localization regime: $0 \leq \alpha < \frac{1}{2}$ and the LES at $E \in (-2,2)$ are given by a Poisson point process with intensity measure $n_{\rm SC}(E) ~ds$, where $n_{\rm SC}(s)$ is the density of the semi-circle law. 
\item Delocalization regime: $\frac{1}{2} < \alpha \leq 1$ and the LES is that of the Gaussian orthogonal ensemble (GOE). 
\end{itemize}
These conjectures originated with the numerical studies in \cite{cmi}. Analytical evidence for these conjectures was presented in \cite{fm} based on the analysis of a related $\sigma$-model.  

One way to understand these conjectures is to note that according to the localization bound in \eqref{eq:locHypRBM1}, the localization length for scale $N$ behaves like $\ell^N_{loc} = \mathcal{O}(N^{\alpha \mu})$. Consequently, the ratio of the localization length to the overall scale is 
\beq\label{eq:locLenght1}
\kappa_N := \frac{N^{\alpha \mu}}{N} = N^{\alpha \mu - 1} .
\eeq
For the assumed optimal value $\mu = 2$, we see that this ratio $\kappa_N \rightarrow 0$, if $\alpha < \frac{1}{2}$, and $\kappa \rightarrow \infty$, if $\alpha > \frac{1}{2}$. This is reminiscent of the critical behavior observed for the scaled disorder model of 1D random {\Schr}  operators \cite{kvv}.

In this note, we prove that, under two hypotheses, a weaker version of the first conjecture is true. These hypotheses are satisfied for Gaussian random variables $\omega_{ij}$. For other distributions, our proof establishes the first conjecture only for $0 < \alpha < \frac{1}{3}$ under the strong Wegner estimate and the weak Minami estimate. 

We mention related work of Shcherbina and Shcherbina \cite{ss2014} who proved that the LES for the complex Gaussian Hermitian case and $\alpha < \frac{1}{2}$ could not be GUE by analyzing the second mixed moment of the characteristic polynomial using the supersymmetric method.    

The main problem of the LES for RBM in the localization phase is the determination of the intensity of the limiting process. For an interval $I \subset \R$, and an energy $E \in (-2,2)$, we define 
\beq\label{eq:b_N_defn1}
b_N(I,E) := \E \left\{  {\rm Tr} P_{H_L^N} \left( \frac{1}{N} I + E \right)  \right\}.
\eeq
The intensity measure is given by the limit:
\beq\label{eq:localExp001}
\lim_{N \rightarrow \infty} b_N(I,E) = \lim_{N \rightarrow \infty} \E \left\{  {\rm Tr} P_{H_L^N} \left( \frac{1}{N} I + E \right)  \right\}.
\eeq
Although we strongly expect this limit to be $n_{sc}(E) |I|$, so that the intensity measure of the limiting process is $n_{sc}(E) ~ds$, we have not succeeded in proving this under hypotheses $[H1]$ and $[H2]$. Instead, we prove that for any bounded interval $I \subset \R$, there is a set of energies $E \in (-2,2)$ of full measure for which the random variables $\xi_{\omega,E}^N[I]$ have limit points that are Poisson distributed with a nontrivial intensity.  These nontrivial intensities are the finite, positive limit points of $\mathcal{B}_{I,E} := \{ b_N(I,E) ~|~ N \in \N \}$. We define this set as $\mathcal{L}_{I,E} := \{0< p(I,E) < \infty ~|~ p((I,E) ~{\rm is ~a ~limit ~point ~of} ~\mathcal{B}_{I,E} \}$. 


\subsection{The main results}\label{subsec:main1}

We first state our main result on RBM with real Gaussian random entries.

\begin{theorem}\label{thm:GaussianCase1}
Let $H_L^N$ be a random band matrix as defined in \eqref{eq:RBMdefn1}, with entries that are real, independent, Gaussian random variables $\omega_{ij} = \omega_{ji} \in {\mathcal{N}}(0,1)$, and with band width $2L+1$, where $L = \lfloor N^\alpha \rfloor$, for $0 \leq \alpha < \frac{1}{7}$.  Then,  
 for any interval $I \subset \R$, there exists a set of energies $ \Omega_I \subset (-2,2)$ of full measure, so that for any $E \subset \Omega_I$, the set of random variables $\{ \xi_{\omega,E}^N [I] , N \in \N \}$, associated with the local eigenvalue statistics, has limit points that are Poisson distributed random variables.  In particular, the set of non-trivial intensities $\mathcal{L}_{I,E} \neq \emptyset$, for almost every $E \in (-2,2)$. 
\end{theorem}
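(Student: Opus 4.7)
The plan is to analyze the Laplace functional $\Phi_N(t) := \E\bigl\{\exp\bigl(-t\,\xi_{\omega,E}^N[I]\bigr)\bigr\}$ for $t \geq 0$, and show that along any subsequence $N_k \to \infty$ for which $b_{N_k}(I,E) \to p(I,E) \in \mathcal{L}_{I,E}$ one has
\begin{equation*}
\Phi_{N_k}(t) \longrightarrow \exp\bigl(-p(I,E)(1-e^{-t})\bigr),
\end{equation*}
the Laplace functional of a Poisson random variable of intensity $p(I,E)$. By the standard Laplace-functional characterization of integer-valued point processes, this identifies every nontrivial limit point of $\{\xi_{\omega,E}^N[I]\}$ as Poisson distributed with a positive intensity.

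First I would introduce a spatial decoupling at an intermediate scale $M_N$ obeying $N^{\alpha\mu} \ll M_N \ll N$, where $N^{\alpha\mu}$ is the localization length in \cite{schenker, psss}. Partition $\{-N,\ldots,N\}$ into $K_N \sim N/M_N$ consecutive blocks $\Lambda_k$ and let $\wtilde{H}_L^N$ denote the block-diagonal matrix with restrictions $H_L^N|_{\Lambda_k}$. Because the bandwidth is $2L+1 = 2\lfloor N^\alpha\rfloor+1$, only $\mathcal{O}(L K_N)$ entries of variance $(2L+1)^{-1}$ are removed. Using the fractional-moment eigenfunction-decay estimates, the trace-class perturbation $P_{H_L^N}(\tfrac{1}{N}I+E) - P_{\wtilde{H}_L^N}(\tfrac{1}{N}I+E)$ is controlled by the probability that a localized eigenfunction in the rescaled window reaches a block boundary, and this expected error is $o(1)$. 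Hence $\Phi_N(t) = \wtilde\Phi_N(t) + o(1)$ where $\wtilde\Phi_N$ is the corresponding Laplace functional for $\wtilde H_L^N$.

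By independence of the blocks, $\log \wtilde\Phi_N(t) = \sum_k \log \E\{e^{-t\xi_k^N[I]}\}$, where $\xi_k^N[I]$ counts the eigenvalues of $H_L^N|_{\Lambda_k}$ in $\tfrac{1}{N}I + E$. The strong Minami estimate of \cite{psss} yields $\P(\xi_k^N[I] \geq 2) \leq C(|I| M_N/N)^2$, and summing over $K_N$ blocks gives the negligible correction $\mathcal{O}(M_N/N) = o(1)$. On the complementary event each block contributes at most one eigenvalue, so
\begin{equation*}
\log \E\{e^{-t\xi_k^N[I]}\} = -(1-e^{-t})\,\E\{\xi_k^N[I]\} + o(K_N^{-1}),
\end{equation*}
and summation gives $\log \wtilde\Phi_N(t) = -(1-e^{-t})\,b_N(I,E) + o(1)$. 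The strong Wegner estimate bounds $b_N(I,E) \leq C|I|$ uniformly in $N$, so $\mathcal{B}_{I,E}$ is bounded; extracting a convergent subsequence identifies $\Phi_{N_k}(t)$ with the Poisson Laplace functional evaluated at its limit point. To show that $\mathcal{L}_{I,E}$ contains a positive element for a.e.\ $E$, I would use the integral identity $\int_J b_N(I,E)\,dE \to |I|\int_J n_{sc}(E)\,dE$ for every subinterval $J \subset (-2,2)$, which follows from semicircle convergence of the IDS; combined with the uniform $L^\infty$ bound and the Lebesgue differentiation theorem, this forces $\limsup_N b_N(I,E) \geq |I|\,n_{sc}(E) > 0$ for a.e.\ $E \in (-2,2)$, giving the full-measure set $\Omega_I$.

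The main obstacle is the tight balancing of the three scales $N^{\alpha\mu}$, $M_N$, and $N$. One needs $M_N$ large enough that the fractional-moment decoupling error is genuinely negligible, yet small enough that the summed Minami bound $K_N (|I|M_N/N)^2$ vanishes, with additional room to absorb Wegner-type trace perturbations. Matching these windows against the quantitative exponents available for Gaussian entries in \cite{schenker, psss} yields the threshold $\alpha < 1/7$. The most delicate step is verifying that the coupling error is controlled \emph{uniformly} in the spectral window of width $|I|/N$ about a generic $E \in (-2,2)$, so that the error terms do not overwhelm the small expected eigenvalue count $b_N(I,E)$ that ultimately produces the Poisson intensity.
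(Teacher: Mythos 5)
Your proposal follows essentially the same route as the paper: decouple into independent blocks at an intermediate scale using the localization bounds of \cite{schenker,psss}, kill double points with the strong Minami estimate so that the per-block generating function reduces to $-(1-e^{-t})$ times the expected count, bound $b_N(I,E)$ via the strong Wegner estimate, and obtain a positive intensity for a.e.\ $E$ by integrating $b_N(I,E)$ over energy and invoking the semicircle law. The only (cosmetic) differences are the use of the Laplace functional in place of the characteristic function and the Lebesgue-differentiation argument in place of the paper's reverse-Fatou contradiction, which in fact yields the slightly sharper pointwise bound $\limsup_N b_N(I,E)\geq |I|\,n_{sc}(E)$.
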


%

The main new contribution is that the set $\{ b_N(I,E) ~|~ E \in \Omega_I, n \in \N \}$, where $b_N(I,E)$ is defined in \eqref{eq:b_Nconv1}, has at least one finite, positive limit point. Theorem \ref{thm:GaussianCase1} is a specific application of our main theorems on local eigenvalue statistics for RBM in the localization regime. In order to discuss the general case, we present two hypotheses and then discuss models for which these hypotheses hold true. 

\begin{enumerate}

%
%
%

\item \textbf{strong Wegner and Minami estimates} 


\vspace{.1in}

\noindent
\textbf{[H1s]:} 
The following estimates hold at any scale $\widetilde{N}$. 


\begin{enumerate}
\item $[sW]:$ For any bounded interval $I \subset \R$, we have  
\beq\label{eq:wegner_gauss1}
\E \{ {\rm Tr} \chi_I (H_L^{\widetilde{N}} ) \} \leq C_0 |I| \widetilde{N} .
\eeq

\medskip

\item $[sM]$: For any bounded interval $I \subset \R$, we have  
\bea\label{eq:minami_gauss1} 
\P \{ {\rm Tr} \chi_I(H_L^{\widetilde{N}} )  \geq 2 \} & \leq & \E \{ {\rm Tr} \chi_I(H_L^{\widetilde{N}} ) ( {\rm Tr} \chi_I(H_L^{\widetilde{N }}) - 1 ) \} \nonumber \\
 & \leq  & C_M (  |I| \widetilde{N} )^2 .
\eea
\end{enumerate}

\item \textbf{weak Wegner and Minami estimates} 

\vspace{.1in}

\noindent
\textbf{[H1w]:} 
The following estimates hold at any scale $\widetilde{N}$. 
\begin{enumerate}
\item $[wW]:$ For any bounded interval $I \subset \R$, we have  
\beq\label{eq:wegner_weak1}
\E \{ {\rm Tr} \chi_I(H_L^{\widetilde{N}} ) \} \leq C_0 |I| \sqrt{W} \widetilde{N} .
\eeq

\medskip

\item $[wM]$: For any bounded interval $I \subset \R$, we have  
\bea\label{eq:minami_weak1} 
\P \{ {\rm Tr} \chi_I(H_L^{\widetilde{N}} )  \geq 2 \} & \leq & \E \{ {\rm Tr} \chi_I(H_L^{\widetilde{N}} ) ( {\rm Tr} \chi_I(H_L^{\widetilde{N}} ) - 1 ) \} \nonumber \\
 & \leq  & C_M (  |I| \sqrt{W} {\widetilde{N}} )^2 .
\eea
\end{enumerate}

\item \textbf{Localization estimate}

\vspace{.1in}

\noindent
\textbf{[H2]}:  For $0 \leq \mu \leq 2$, the following estimate holds.  
Given $\rho > 0$ and $s \in (0,1)$, there exist finite constants $C_{\rho,s} > 0$ and $\alpha_{\rho,s} > 0$,
so that for all $j, k \in \Lambda_N$, there exists a $\sigma > 0$ so that 
\beq\label{eq:locHypRBM1}
\E \left\{| \langle \delta_j,  (H_\omega ^N - E )^{-1} \delta_k  \rangle |^s   \right\} 
\leq C_{\rho,s} N^{{s \alpha \sigma}} e^{- \alpha_{\rho,s} \frac{|j-k|}{N^{\alpha \mu} } }   , 
\eeq
for all $E \in [-\rho, \rho]$. 


\end{enumerate}

\medskip

\begin{remarks}
\begin{enumerate}


\item The distinction between the weak and strong estimates in $[H1s]$ and $[H1w]$ is the factor of $\sqrt{W} \sim N^{\frac{\alpha}{2}}$. The weak estimates are obtained by spectral averaging applied to the diagonal random variables (see, for example, \cite{bh1}). The strong estimates for Gaussian random variables are due to \cite{psss}.

\item With regard to the localization bound $[H2]$, since we want exponential decay outside of the band width for $|j-k| \approx N$, we must have $\alpha \mu < 1$. 
If we assume that $[H2]$ holds for $\mu = 2$, then we must have $\alpha < \frac{1}{2}$, the conjectured  regime of localization. 

\item The localization bound \eqref{eq:locHypRBM1} was proven to hold in \cite{schenker} for a family of random variables with an absolutely continuous density and satisfying other conditions. Unfortunately, the proof in these cases only guarantees the existence of $\mu > 0$ and $\sigma > 0$.
For the case of $\mathcal{N}(0,1)$-Gaussian random variables, Schenker proved that the estimate holds for  some $\mu \leq 8$ and $\sigma = \frac{1}{2}$.  This means that the exponent $0 \leq \alpha < \frac{1}{8}$. This result was improved in  \cite[Theorem 4]{psss} to some $\mu \leq 7$ so that $0 \leq \alpha < \frac{1}{7}$. The localization bound is believed to hold up to the critical exponent $\alpha = \frac{1}{2}$.

\end{enumerate}

\end{remarks}

Not much is known about the nature of LES even for the range $0< \alpha < \frac{1}{7}$, for which the localization bound has been proven (\cite{psss, schenker}). The analysis of the characteristic exponents in section \ref{sec:limit_pts1}, and of the intensity in section \ref{sec:intensity1}, together with the results of \cite{psss, schenker}, form the basis of Theorem \ref{thm:GaussianCase1} that may be paraphrased as: 
\noindent
\emph{Consider real random band matrices with Gaussian distributed random entries as in \eqref{eq:RBMdefn1}, and with band widths growing as $N^\alpha$, for $0 < \alpha < \frac{1}{7}$. For any interval $I \subset \R$, 
there is a set of energies $\Omega_I \subset  (-2,2)$ of full measure such that :
\begin{enumerate}
\item All nontrivial limit points of the random variables $\{ \xi_{\omega,E}^N[I] ~|~ N \in \N \}$ are Poisson distributed;
\item For each energy $E \in \Omega_I$, there are nontrivial, Poisson distributed limit points of
$\{ \xi_{\omega,E}^N[I] ~|~ N \in \N \}$.
\end{enumerate}
}
\noindent
Nontriviality means that the limit point is random variable with a finite, nonzero characteristic exponent.

In the two main theorems below, we show how characterizations of the LES may be derived from various assumptions. For example, we believe that the localization bound should hold in the natural range 
$0 \leq \alpha < \frac{1}{2}$. Assuming this, we prove that the results paraphrased above hold for $\alpha$ in this natural range.  


We begin with a theorem that is rather general and which applies under the weakest possible hypotheses: The weak Wegner estimate $[H1w]$, the weak Minami estimate $[H1w]$, and the localization bound $[H2]$. This result states that the nontrivial limit points of the random variables $\{ \xi_{\omega,E}^N [I] ~|~ N \in \N \}$ are distributed according to Poisson distributions. Theorem \ref{thm:main0} is similar to our result \cite[Theorem 5.1]{hk1} on the LES for random {\Schr} operators on $L^2 (\R^d)$. 


\begin{theorem}\label{thm:main0}
Let $H_L^N$ be a random band matrix with band width $2L+1$ as defined in \eqref{eq:RBMdefn1} with $L = \lfloor N^\alpha \rfloor$, for $0 \leq \alpha \leq 1$.   Then, the weak Wegner estimate $[H1w]$ and the weak Minami estimate $[wM1]$ both hold. We assume the localization estimate  $[H2]$ for $\mu =2$ and $0 < \alpha <  \frac{1}{2}$. 
 Then, for each $E \in (-2,2)$, all the nontrivial limit points of the random variables $\{ \xi_{\omega,E}^N [I] ~|~ N \in \N \}$  are distributed according to Poisson distributions with characteristic exponents having the form 
 $$
 \Psi (t) = (e^{it} - 1) p_1^*(I),
 $$
 where the measures $p_1^*(I)$, defined in terms of the local array $\{ \eta_{\omega,E}^{p,N_k} \}$ (see section \ref{subsec:outline1}), are the nontrivial limit points of the family
 $$
 \{  N^{1-\beta} \P \{ \eta_{\omega,E}^{1,N} [I] = 1 \} ; N \in \N \} .
 $$
 \end{theorem}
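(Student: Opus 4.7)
The plan is to analyze the characteristic function $\mathbb{E}[\exp(it\,\xi_{\omega,E}^N[I])]$ along subsequences and exhibit its logarithm as a limit of the form $(e^{it}-1)\,p_1^*(I)$. This follows the block-decoupling strategy used in \cite{hk1} for random Schr\"odinger operators. Fix an auxiliary exponent $\beta$ with $2\alpha < \beta < 1$, which is possible exactly because $[H2]$ is assumed with $\mu=2$ and $\alpha<\tfrac12$. Partition the index set $\{-N,\dots,N\}$ into $M_N := \lfloor N^{1-\beta}\rfloor$ consecutive blocks of length $\sim N^{\beta}$, each much larger than the putative localization length $N^{\alpha\mu}=N^{2\alpha}$ but much smaller than $N$, and form the associated block-diagonal random band matrices $H_L^{N,p}$ by zeroing out the entries coupling different blocks. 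Let $\eta_{\omega,E}^{p,N}$ denote the rescaled eigenvalue point processes of $H_L^{N,p}$; by construction, the $\{\eta_{\omega,E}^{p,N}\}_{p}$ are independent across $p$ and (by translation invariance of the $\omega_{ij}$) identically distributed.

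The first and crucial step is the \emph{decoupling estimate}
\[
\lim_{N\to\infty}\Bigl|\mathbb{E}\bigl[e^{it\,\xi_{\omega,E}^N[I]}\bigr] - \mathbb{E}\Bigl[\exp\Bigl(it\sum_{p=1}^{M_N}\eta_{\omega,E}^{p,N}[I]\Bigr)\Bigr]\Bigr|=0.
\]
I would obtain this by comparing the trace $\mathrm{Tr}\,\chi_I(N(H_L^N - E)) - \sum_p \mathrm{Tr}\,\chi_I(N(H_L^{N,p} - E))$ in expectation through a Helffer--Sj\"ostrand representation, reducing the expression to expected fractional-moment resolvent matrix elements $\mathbb{E}\,|\langle \delta_j,(H_L^N-z)^{-1}\delta_k\rangle|^{s}$ with $|j-k|\gtrsim N^{\beta}$. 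The localization hypothesis $[H2]$ with $\mu=2$ furnishes the bound $C\, N^{s\alpha\sigma}\exp(-c\,N^{\beta-2\alpha})$, which is super-polynomially small because $\beta > 2\alpha$, and this exponential gain must absorb the polynomial trace loss $N$ together with the Wegner loss $\sqrt{W}\,N$ coming from $[H1w]$.

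Once decoupling is in place, the Poisson convergence is classical. Independence and identical distribution give
\[
\mathbb{E}\Bigl[\exp\Bigl(it\sum_{p}\eta_{\omega,E}^{p,N}[I]\Bigr)\Bigr]=\bigl(\mathbb{E}\,e^{it\,\eta_{\omega,E}^{1,N}[I]}\bigr)^{M_N}.
\]
Expanding each factor and using $\mathbb{P}\{\eta_{\omega,E}^{1,N}[I]\geq 2\}\leq C_M(|I|\sqrt{W}\,N^{\beta})^{2}$ from $[wM]$ (which decays much faster than $M_N^{-1}$ when $\beta > 2\alpha$), one gets
\[
\log \mathbb{E}\,e^{it\,\eta_{\omega,E}^{1,N}[I]}=(e^{it}-1)\,\mathbb{P}\{\eta_{\omega,E}^{1,N}[I]=1\}+o(M_N^{-1}),
\]
so the logarithm of the product equals $(e^{it}-1)\,M_N\,\mathbb{P}\{\eta_{\omega,E}^{1,N}[I]=1\}+o(1)$. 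Along any subsequence $N_k$ on which $M_{N_k}\mathbb{P}\{\eta_{\omega,E}^{1,N_k}[I]=1\}\to p_1^*(I)\in(0,\infty)$, the characteristic function converges to $\exp((e^{it}-1)p_1^*(I))$, which is exactly the Poisson characteristic function with intensity $p_1^*(I)$.

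The main obstacle is the decoupling step: making the resolvent comparison quantitative enough that the exponential gain $\exp(-c\,N^{\beta-2\alpha})$ from $[H2]$ genuinely dominates the polynomial losses stemming from the trace, the Wegner-type estimate, and the extra $\sqrt{W}$ factor in $[H1w]$. This $\sqrt{W}$ factor is precisely what pins the natural range to $\alpha<\tfrac12$: otherwise one can no longer simultaneously choose $\beta$ with $2\alpha<\beta<1$ and absorb the Wegner losses. A secondary technical nuisance is the handling of the hopping terms cut by the block decomposition (either via a direct trace-norm resolvent identity or through some spectral averaging to reduce to the fractional moment framework of $[H2]$), but once the decoupling bound is established the Poisson conclusion and the identification of the characteristic exponent are routine.
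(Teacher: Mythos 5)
Your overall route coincides with the paper's: partition $\{-N,\dots,N\}$ into $\sim N^{1-\beta}$ blocks of length $\sim N^{\beta}$ with $2\alpha<\beta<1$, use $[H2]$ to replace $\xi_{\omega,E}^N[I]$ by the independent superposition $\sum_p\eta_{\omega,E}^{p,N}[I]$, and then expand the characteristic function, using the Wegner bound to control the logarithm and the Minami bound to remove the terms with two or more points, arriving at $\Psi(t)=(e^{it}-1)p_1^*(I)$ along subsequences. However, two of your quantitative claims are not correct as stated, and they are exactly the places where the real work lies.

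First, the decoupling step. Subtracting the traces does \emph{not} reduce to resolvent entries with $|j-k|\gtrsim N^{\beta}$: by the resolvent identity, $\mathrm{Tr}\,f(H_L^N)-\sum_p\mathrm{Tr}\,f(H_L^{p,N})$ receives contributions from sites $j$ at every distance from the cut bonds, including $|j-k|=O(1)$, where $[H2]$ gives no smallness at all. The paper (section \ref{sec:limitPP1}) handles this by splitting each block into a boundary layer of width $N^{\alpha\mu}\log N^{\delta}$ and an interior: the layer term $A_N$ is controlled by a priori bounds on $\E\{\Im G(j,j;z)\}$ together with the fact that the layer occupies a vanishing fraction $N^{\alpha\mu}\log N^{\delta}/N^{\beta}$ of each block, while only the interior term $B_N$ uses localization, and there the relevant separation is only $\gtrsim N^{\alpha\mu}\log N^{\delta}$, so the gain is a large negative power $N^{-\kappa\delta}$ ($\delta$ chosen large), not the super-polynomial factor $e^{-cN^{\beta-2\alpha}}$ you invoke. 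You cannot simply take the separation to be of order $N^{\beta}$, because then the complementary layer is a positive fraction of each block and its contribution to the averaged trace no longer vanishes. So the issue you defer as a ``secondary technical nuisance'' is the main content of the decoupling argument, and your description of it would fail as written.

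Second, the double-point estimate. With the weak Minami bound $[wM]$ the probability of two rescaled eigenvalues of one block in $I$ is $O(|I|^2N^{\alpha+2\beta-2})$ (your displayed bound $(|I|\sqrt{W}N^{\beta})^2$ forgot the $1/N$ scaling of the interval), while $M_N\sim N^{1-\beta}$, so the total error is $O(N^{\alpha+\beta-1})$; the condition for this to vanish is $\alpha+\beta<1$, not $\beta>2\alpha$ (for instance $\alpha=0.3$, $\beta=0.8$ satisfies $\beta>2\alpha$ yet $M_N\P\{\eta_{\omega,E}^{1,N}[I]\geq2\}\sim N^{0.1}\to\infty$). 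Combined with the localization requirement $\beta>2\alpha$, this forces $\alpha<\frac13$ when only the weak Minami estimate is available, which is precisely the constraint appearing in Corollary \ref{cor:weakM1} and Proposition \ref{prop:mainW1}; the wider range is tied to the strong estimates $[H1s]$. A similar check is needed for the logarithm expansion, where the weak Wegner bound gives $\E\{\eta_{\omega,E}^{1,N}[I]\}=O(N^{\frac{\alpha}{2}+\beta-1})$, so the exponent arithmetic behind both smallness claims has to be redone before the identification of the characteristic exponent is justified.
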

 
%

The weak Wegner and Minami estimates for RBM follow easily from spectral averaging over the diagonal entries and standard methods. The only limitation on the width comes from the localization bounds. In the following corollary, we strengthen Theorem \ref{thm:main0} using what is presently known concerning the localization bounds from \cite{psss,schenker}.

\begin{corollary} If the real random variables in the RBM \eqref{eq:RBMdefn1} are Gaussian distributed and $0 < \alpha < \frac{1}{7}$, then the assumptions of Theorem \ref{thm:main0} hold. Consequently, the nontrivial limit points of $\{ \xi_{\omega,E}^N [I] ~|~ N \in \N \}$  are distributed according to Poisson distributions. More generally, if the random variables are distributed with a bounded density, then there exists a nonzero $0 < \alpha_0 < \frac{1}{2}$, so that the assumptions and results of Theorem \ref{thm:main0} hold for $0 < \alpha < \alpha_0$.
\end{corollary}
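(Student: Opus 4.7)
The plan is to verify the three hypotheses of Theorem \ref{thm:main0}---the weak Wegner estimate $[H1w]$, the weak Minami estimate $[wM]$, and the localization bound $[H2]$---and then invoke that theorem directly. The first two hypotheses are essentially insensitive to the precise distribution of the $\omega_{ij}$ beyond boundedness of the density, while the third is the source of the restriction on $\alpha$ and therefore determines the range of validity in both parts of the corollary.

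First, for $[H1w]$ I would use spectral averaging over the diagonal entries $\omega_{ii}$. After the normalization $(2L+1)^{-1/2}$ in \eqref{eq:RBMdefn1}, the diagonal matrix elements of $H_L^{\widetilde{N}}$ are independent random variables whose common density is rescaled by $\sqrt{W}$. A Combes--Hislop-style spectral average then produces $\E \{ {\rm Tr}\, \chi_I(H_L^{\widetilde{N}}) \} \leq C \|\rho\|_\infty |I| \sqrt{W} \widetilde{N}$, which is exactly $[H1w]$. The same mechanism, iterated over two distinct diagonal entries in the spirit of the Combes--Germinet--Klein rank-two perturbation argument (or the Bellissard--Hislop--Stolz variant), yields $[wM]$, with each of the two averages contributing one factor of $\|\rho\|_\infty |I| \sqrt{W}$. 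Both bounds hold unconditionally for $0 \leq \alpha \leq 1$ and therefore place no restriction on $\alpha$.

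For $[H2]$ I would simply invoke the existing literature. For $\omega_{ij} \in \mathcal{N}(0,1)$, \cite[Theorem~4]{psss} supplies \eqref{eq:locHypRBM1} with some $\mu \leq 7$ and $\sigma = \tfrac{1}{2}$; for a generic iid family with bounded density, \cite{schenker} furnishes it with some finite $\mu_0 > 0$. Because genuine exponential decay on the scale $|j-k| \sim N$ requires $\alpha \mu < 1$, the admissible range in the Gaussian case is $\alpha < \tfrac{1}{7}$, while in the general bounded-density case one sets $\alpha_0 := \min(1/\mu_0, 1/2)$, which puts $[H2]$ in the form required for Theorem \ref{thm:main0} on the interval $0 < \alpha < \alpha_0$.

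The main obstacle is book-keeping rather than analytical: Theorem \ref{thm:main0} is literally stated with $\mu = 2$, whereas the localization results actually available from \cite{psss,schenker} furnish only $\mu = 7$ (Gaussian) or some distribution-dependent $\mu_0$. Accordingly, before applying Theorem \ref{thm:main0} I would verify that its proof uses only the combined constraint $\alpha \mu < 1$---which is what ensures that the fractional-moment decay genuinely extends past the band width---and not the specific value $\mu = 2$. Once this is confirmed, Theorem \ref{thm:main0} applies with $\mu$ replaced by the relevant value and yields the Poisson distribution of all nontrivial limit points of $\{\xi_{\omega,E}^N[I]\,:\,N \in \N\}$ for the stated ranges of $\alpha$.
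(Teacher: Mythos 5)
Your proposal follows essentially the same route as the paper, which gives no separate proof of this corollary but justifies it exactly as you do: $[H1w]$ via spectral averaging over the diagonal entries (the paper points to \cite{bh1} and ``standard methods''), $[H2]$ from \cite[Theorem 4]{psss} in the Gaussian case ($\mu\le 7$, $\sigma=\tfrac12$) and from \cite{schenker} for bounded densities, together with the observation, made in the paper's own remarks, that the argument behind Theorem \ref{thm:main0} uses only the compatibility of scales $\alpha\mu<\beta<1$ and not the literal value $\mu=2$. One caveat worth recording (a looseness the corollary shares with the paper itself): if one runs the proof with only the \emph{weak} Minami estimate, the additional constraint $\alpha+\beta<1$ from \eqref{eq:charact4w} and \eqref{eq:charactExp6} forces $\alpha(\mu+1)<1$, i.e.\ $\alpha<\tfrac18$ when $\mu=7$, so the stated Gaussian range $\alpha<\tfrac17$ is really obtained by invoking the strong estimates of \cite{psss}, under which the only scale condition is $\alpha\mu<\beta<1$.
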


Theorem \ref{thm:main0} and its corollary do not state the existence of any nontrivial limit points. Upon strengthening the hypotheses, our second main result is the following theorem. 

\begin{theorem}\label{thm:main1}
Let $H_L^N$ be a random band matrix with band width $2L+1$ as defined in \eqref{eq:RBMdefn1} with $L = \lfloor N^\alpha \rfloor$, for $0 \leq \alpha \leq 1$. Let $I \subset \R$ be a bounded interval.   
We assume the strong Wegner and strong Minami estimates of $[H1s]$, and the localization estimate  $[H2]$ for $\mu =2$ and $0 < \alpha < \frac{1}{2}$. 
Then, there exists a set $\Omega_I \subset (-2,2)$, depending on $I$, with $|\Omega_I| = 4$, such that for fixed $E \in \Omega_I$, the random variables $\{ \xi_{\omega,E}^N [I] ~|~ N \in \N \}$ have non-trivial Poisson-distributed limit points and the intensity of the corresponding Poisson distribution is given
by $\limsup_N b_N(I,E) > 0$, where $b_N(I,E)$ is defined in \eqref{eq:localExp01}.
In particular, each finite, positive limit point of the set $\{ b_N(I,E) ~|~ E \in \Omega_I, n \in \N \}$
is the intensity of a Poisson distributed random variable that is a limit point of the set $\{ \xi_{\omega,E}^N [I] ~|~ N \in \N \}$. 
\end{theorem}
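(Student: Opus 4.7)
The plan is to invoke Theorem~\ref{thm:main0} for the Poisson structure of all nontrivial limit points, and then combine the strong Wegner and Minami estimates $[H1s]$ with the semicircle law to produce nontrivial limit points of $\{b_N(I,E)\}$ and identify them with the Poisson intensities. Since $[H1s]$ immediately implies $[H1w]$, Theorem~\ref{thm:main0} applies at every $E\in(-2,2)$, so any nontrivial limit point of $\xi_{\omega,E}^N[I]$ is Poisson with characteristic exponent $(e^{it}-1)p_1^*(I)$, where $p_1^*(I)$ is a nontrivial limit point of $\{N^{1-\beta}\P\{\eta_{\omega,E}^{1,N}[I]=1\}:N\in\N\}$ associated with the decomposition of $\Lambda_N$ into $N^{1-\beta}$ subboxes of side $N^\beta$.

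The first task is to replace this combinatorial expression by the more transparent $b_N(I,E)$. Applying the strong Minami estimate at the subbox scale $\widetilde N=N^\beta$ with interval of length $|I|/N$ gives
\[
\E\bigl\{\eta_{\omega,E}^{p,N}[I]\bigl(\eta_{\omega,E}^{p,N}[I]-1\bigr)\bigr\}\leq C_M\bigl(|I|N^{\beta-1}\bigr)^2,
\]
so $\E\{\eta_{\omega,E}^{p,N}[I]\}=\P\{\eta_{\omega,E}^{p,N}[I]=1\}+O(N^{-2(1-\beta)})$ in each subbox. Summing over the $N^{1-\beta}$ subboxes and using the localization bound $[H2]$ to compare the spectral count of $H_L^N$ with the sum of counts of its Dirichlet subbox restrictions yields $b_N(I,E) = N^{1-\beta}\P\{\eta_{\omega,E}^{1,N}[I]=1\}(1+o(1))$, so the two sequences share the same finite positive limit points. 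This gives the formula intensity $=\limsup_N b_N(I,E)$ whenever the latter is positive and finite.

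The second task is to produce such limit points on a set of full measure. The strong Wegner estimate supplies the uniform bound $b_N(I,E)\leq C_0|I|$, so every limit point is finite. A short Fubini computation gives
\[
\int_{-2}^{2} b_N(I,E)\,dE \;\longrightarrow\; 2|I|,
\]
the boundary contribution from $|E|\approx 2$ being $o(1)$ by $[sW]$. Reverse Fatou (using the uniform $L^\infty$ bound) then delivers $\int_{-2}^{2}\limsup_N b_N(I,E)\,dE \geq 2|I|$, which already forces $\limsup_N b_N(I,E) > 0$ on a set of positive measure in $(-2,2)$. To upgrade this to $|\Omega_I|=4$, I would test $b_N(I,\cdot)$ against bumps centered at points of $(-2,2)$, exploit the weak convergence $b_N(I,\cdot)\rightharpoonup 2|I|n_{sc}(\cdot)$ implied by the semicircle law, and use the uniform $L^\infty$ bound together with a regularity-in-$E$ estimate on $b_N(I,E)$ coming from spectral averaging of the diagonal random variables in $[H1s]$ to transfer positivity of the local average to pointwise positivity of $\limsup_N b_N(I,E)$.

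The principal obstacle lies precisely in this last transfer. The Fubini and reverse-Fatou steps give the integral lower bound essentially for free, but weak convergence alone does not prevent $\limsup_N b_N(I,E)$ from dropping to zero at individual $E$, since $b_N(I,\cdot)$ can oscillate on the fine scale $|I|/N$. Controlling these oscillations uniformly in $N$—through the translation symmetry of the band matrix ensemble on $\Lambda_N$, or through a direct comparison of $b_N(I,E)$ at nearby energies via spectral averaging of the diagonal entries—is where the bulk of the new work resides. Once $\Omega_I$ is identified as a full-measure subset of $(-2,2)$, both assertions of the theorem follow by combining the Poisson statement of Theorem~\ref{thm:main0} with the Minami-based identification of intensities above.
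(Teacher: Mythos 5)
Your overall route coincides with the paper's: Poisson structure of limit points from the characteristic--exponent analysis, identification of the intensity $p_1(I)$ with limit points of $b_N(I,E)$ via the Minami estimate at the subbox scale plus the localization hypothesis $[H2]$, and existence of positive limit points from the strong Wegner bound, averaged semicircle convergence, and reverse Fatou. However, you stop exactly at the step the theorem actually requires --- passing from ``$\limsup_N b_N(I,E)>0$ on a set of positive measure'' to a full-measure set $\Omega_I$ --- and you declare it an open obstacle needing oscillation control or a regularity-in-$E$ estimate for $b_N(I,\cdot)$. That is a genuine gap in your argument, and it is also a misdiagnosis of what is needed. The theorem only asserts almost-everywhere positivity of $\limsup_N b_N(I,E)$ (that is what $|\Omega_I|=4$ means), not positivity at every $E$, so no uniform control of the fine-scale oscillations of $b_N(I,\cdot)$ is required. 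The paper's Lemma \ref{lemma:intensity1} proves the averaged convergence $\int_J b_N(I,E)\,dE\to |I|\,N_{sc}(J)$ for \emph{every} subinterval $J\subset(-2,2)$ (via the pointwise limit \eqref{eq:ptw1} of $b_N(x,J)$ and dominated convergence), not just for $J=(-2,2)$ as in your computation. Combining this localized statement with the uniform bound $b_N(I,E)\le C_0|I|$ from $[sW]$ (the paper's \eqref{eq:Nconv2}) closes the gap by a purely measure-theoretic argument: if $A:=\{E\in(-2,2): \limsup_N b_N(I,E)=0\}$ had positive measure, pick a Lebesgue density point $E_0$ of $A$ with $n_{sc}(E_0)>0$; for small intervals $J\ni E_0$, reverse Fatou gives
\begin{equation*}
|I|\,N_{sc}(J)\;\le\;\int_J \limsup_N b_N(I,E)\,dE\;\le\;C_0|I|\,|J\setminus A|\;=\;o(|J|),
\end{equation*}
contradicting $N_{sc}(J)\ge c(E_0)|J|$. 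This is the content of Lemma \ref{lemma:subseq1}; your ``test against bumps'' idea is the right instinct, but the density-point argument, not regularity of $b_N$ in $E$, is what makes it work, and all the needed ingredients are already in your hands.

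Two smaller points. First, the total mass of the semicircle law is $1$, so $\int_{-2}^2 b_N(I,E)\,dE\to |I|$, not $2|I|$, and the weak limit is $|I|\,n_{sc}(\cdot)$; this does not affect the structure of the argument. Second, invoking Theorem \ref{thm:main0} through ``$[H1s]\Rightarrow[H1w]$'' is delicate for the full range $0<\alpha<\tfrac12$: the weak-estimate versions of the bounds require $\alpha+\beta<1$ together with $2\alpha<\beta$, which forces $\alpha<\tfrac13$ (this is precisely Corollary \ref{cor:weakM1}). For $\alpha<\tfrac12$ you should run the characteristic-exponent analysis with the strong-estimate bounds \eqref{eq:charact3s}, \eqref{eq:charact4s}, and \eqref{eq:charactExp7}, where only $2\alpha<\beta<1$ is needed, as the paper does in Section \ref{sec:limit_pts1}.
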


We recall that a stronger result for the $\alpha = 0$ fixed band width case was obtained in \cite{bh1}. The local point process $\xi_{\omega,E}^N$ converges to a Poisson point process with intensity measure $n_{\infty,W}(E) ~ds$,
where $n_{\infty,W}$ is the density of states given by $n_{\infty,W}(I) = n_{sc}(I)  + \mathcal{O}(W^{-1})$ for any interval $I \subset \R$. Since the band width is independent of $N$, the strong and weak Wegner and Minami estimates are the same and a basic localization bound holds at all energies. The stronger result for $\alpha = 0$ is due to the fact that much more can be proved about the convergence of the density of states $n_N$ in this case.  

%

Another immediate corollary follows if we replace the strong Minami estimate $[H1s]$ by the weak Minami estimate $[H1w]$. The constraint on $\alpha$ is due to the condition $\alpha + \beta < 1$ in  \eqref{eq:charactExp6} and condition (3) in Remark \ref{remark:scales1}.

\begin{corollary}\label{cor:weakM1}
We assume the strong Wegner estimate of $[H1s]$, the weak Minami estimate of $[H1w]$, and the localization bound $[H2]$ with $\mu = 2$.  Then the results of Theorem \ref{thm:main1} hold
for $0 < \alpha < \frac{1}{3}$. 
\end{corollary}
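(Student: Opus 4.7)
The plan is to revisit the proof of Theorem \ref{thm:main1} and locate the two places where the conditions on $\alpha$ enter: the decoupling/localization step, and the Minami step. Substituting the weak Minami for the strong Minami costs a factor of $W = N^\alpha$ in the squared bound \eqref{eq:minami_weak1} versus \eqref{eq:minami_gauss1}, so we only have to track how this factor propagates through the choice of block scale in the local array $\{\eta_{\omega,E}^{p,N_k}\}$.

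First I would fix a block exponent $\beta$, partition the index set $\{-N,\dots,N\}$ into roughly $N^{1-\beta}$ blocks of size $N^\beta$, and form the auxiliary process $\eta_{\omega,E}^{p,N_k}$ as in the proof of Theorem \ref{thm:main1}. For the localization bound $[H2]$ with $\mu = 2$ to yield effective decoupling between blocks, the block scale must dominate the localization length $N^{\alpha \mu} = N^{2\alpha}$, giving the constraint $\beta > 2\alpha$ that corresponds to condition (3) of Remark \ref{remark:scales1}. The strong Wegner estimate of $[H1s]$ still controls $\E\{\xi_{\omega,E}^N[I]\}$ and hence the intensities, independently of which Minami estimate is used.

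Next I would apply the weak Minami estimate at each block of size $N^\beta$ to the rescaled window $E + \tfrac{1}{N} I$. This produces the bound
\begin{equation*}
\P\bigl\{{\rm Tr}\, \chi_{E + I/N}(H_L^{N^\beta}) \geq 2\bigr\} \leq C_M \bigl(|I|\, \sqrt{W}\, N^{\beta-1}\bigr)^2 = C_M\, |I|^2\, N^{\alpha + 2\beta - 2}.
\end{equation*}
Summed over the $N^{1-\beta}$ blocks, the total contribution of multiple eigenvalues per block is $\mathcal{O}(N^{\alpha+\beta-1})$, which vanishes as $N \to \infty$ precisely when $\alpha + \beta < 1$. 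This is the substitution that replaces the corresponding step in \eqref{eq:charactExp6} of the proof of Theorem \ref{thm:main1}. Combining the two requirements $2\alpha < \beta$ and $\beta < 1 - \alpha$, a valid $\beta$ exists if and only if $3\alpha < 1$, which gives the stated range $0 < \alpha < \tfrac{1}{3}$.

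Once such a $\beta$ is chosen, the remainder of the argument of Theorem \ref{thm:main1} goes through verbatim: the characteristic-exponent computation from Section \ref{sec:limit_pts1} gives $\Psi(t) = (e^{it}-1)\,p_1^*(I)$ for any nontrivial limit point, and the intensity analysis from Section \ref{sec:intensity1} identifies $p_1^*(I)$ with a finite positive limit point of $\{b_N(I,E)\}$ along a set $\Omega_I \subset (-2,2)$ of measure four. The main obstacle is purely bookkeeping: one must check that no other invocation of Minami in the original proof hides a stronger form, but since the strong Minami is used only to quench multiple eigenvalues in a block (the Wegner bound handling the first-moment side), the single adjustment above suffices and the $N^\alpha$-loss translates exactly into the tightening from $\alpha < \tfrac{1}{2}$ down to $\alpha < \tfrac{1}{3}$.
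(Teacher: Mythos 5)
Your proposal is correct and follows essentially the same route as the paper, which justifies the corollary by combining the constraint $\alpha\mu = 2\alpha < \beta$ from the localization/decoupling step (condition (3) of Remark \ref{remark:scales1}) with the constraint $\alpha + \beta < 1$ coming from the weak Minami bound in \eqref{eq:charactExp6}, yielding $\alpha < \tfrac{1}{3}$, while the strong Wegner estimate continues to supply the intensity analysis of Section \ref{sec:intensity1}. Your bookkeeping of the $\sqrt{W}$ loss per block and the summation over the $\mathcal{O}(N^{1-\beta})$ blocks reproduces exactly the paper's estimate, so no further argument is needed.
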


Finally, we give a sufficient condition for the Poisson distribution of the limit points of $\xi_{\omega,E}^N[I]$ if we only use hypotheses $[H1w]$ and $[H2]$. We do not know how to prove the necessary estimates in order to establish a finite, nonvanishing intensity, under these weaker conditions.


\begin{proposition}\label{prop:mainW1}
Let $H_L^N$ be a random band matrix with band width $2L+1$ as defined in \eqref{eq:RBMdefn1} with $L = \lfloor N^\alpha \rfloor$, for $0 \leq \alpha \leq 1$.   
We assume the hypotheses $[H1w]$, the weak Wegner and the weak Minami estimates, and the localization estimate 
$[H2]$ for $\mu =2$ and $0 < \alpha < \frac{1}{3}$.
 Let $I \subset \R$ be a bounded interval and $E \in (-2,2)$. 
 Then, the random variables $\{ \xi_{\omega,E}^N [I] ~|~ N \in \N \}$ have non-trivial Poisson-distributed limit points provided $\limsup_N N_\beta \P \{ \eta_{\omega,E}^{1,N}[I]  = 1  \} > 0$ and finite. The intensity of the corresponding Poisson distribution is given
by $\limsup_N b_N(I,E) > 0$, where $b_N (I,E)$ is defined in \eqref{eq:localExp01}.
\end{proposition}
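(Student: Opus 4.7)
The plan is to repeat the block-decomposition and Poisson-convergence scheme underlying Theorem \ref{thm:main0}, and to use the new quantitative hypothesis on $\limsup_N N_\beta \P\{\eta_{\omega,E}^{1,N}[I] = 1\}$ only at the last step, to certify that the resulting Poisson intensity is nontrivial and to identify it with $\limsup_N b_N(I,E)$. Fix an auxiliary exponent $\beta$ with $2\alpha < \beta < 1 - \alpha$; this window is nonempty exactly when $\alpha < \tfrac{1}{3}$. Partition $\Lambda_N := \{-N, \dots, N\}$ into $N_\beta \sim N^{1-\beta}$ consecutive blocks $\Lambda_p$ of length $\sim N^\beta$, and let $H_L^{p,N}$ denote the restriction of $H_L^N$ to $\ell^2(\Lambda_p)$. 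Since $L \sim N^\alpha \ll N^\beta$, the matrices $\{H_L^{p,N}\}_p$ are independent, and their rescaled eigenvalue counts $\eta_{\omega,E}^{p,N}[I]$ form the local array from Section \ref{subsec:outline1}.

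The first technical step is to compare $\xi_{\omega,E}^N[I]$ with $\sum_p \eta_{\omega,E}^{p,N}[I]$. Using a Helffer--Sj\"ostrand / Combes--Thomas reduction, the difference is controlled by resolvent matrix elements across block boundaries, for which $[H2]$ with $\mu = 2$ supplies exponential decay on the localization scale $N^{2\alpha}$, much smaller than $N^\beta$ since $\beta > 2\alpha$. Combined with $[wW]$, which bounds the expected number of eigenvalues in a spectral window of size $1/N$ by $C|I|\sqrt{W}$, the expected boundary leakage is $o(1)$, so $\xi_{\omega,E}^N[I] - \sum_p \eta_{\omega,E}^{p,N}[I] \to 0$ in probability and in expectation. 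Hence the limits of $\E\{e^{it\xi_{\omega,E}^N[I]}\}$ and $\E\{e^{it\sum_p \eta_{\omega,E}^{p,N}[I]}\}$ coincide, and by independence and identical distribution the latter factorizes as
\[
\Bigl(1 + (e^{it}-1)\,\P\{\eta_{\omega,E}^{1,N}[I] = 1\} + R_N(t)\Bigr)^{N_\beta},
\]
with $|R_N(t)| \le 2\,\P\{\eta_{\omega,E}^{1,N}[I] \ge 2\}$. The weak Minami estimate $[wM]$ gives $\P\{\eta_{\omega,E}^{1,N}[I] \ge 2\} \le C(|I|\sqrt{W}\,N^{\beta-1})^2 = C|I|^2 N^{\alpha + 2\beta - 2}$, so $N_\beta R_N(t) = O(N^{\alpha + \beta - 1}) \to 0$ using $\alpha + \beta < 1$.

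Along a subsequence $N_k$ realizing the positive, finite limsup $p_1^*(I) := \limsup_N N_\beta\,\P\{\eta_{\omega,E}^{1,N}[I] = 1\}$, the classical triangular-array Poisson convergence then yields
\[
\lim_k \log \E\bigl\{e^{it\xi_{\omega,E}^{N_k}[I]}\bigr\} = (e^{it} - 1)\,p_1^*(I),
\]
so the subsequential limit is Poisson with intensity $p_1^*(I) > 0$, as claimed. To identify this intensity with $\limsup_N b_N(I,E)$, I write $b_N(I,E) = \E\{\xi_{\omega,E}^N[I]\} = N_\beta\,\E\{\eta_{\omega,E}^{1,N}[I]\} + o(1)$ by the same boundary reduction applied in expectation, and split $\E\{\eta_{\omega,E}^{1,N}[I]\} = \P\{\eta_{\omega,E}^{1,N}[I] = 1\} + \sum_{k \ge 2} k\,\P\{\eta_{\omega,E}^{1,N}[I] = k\}$; the latter sum is dominated by $\E\{\eta_{\omega,E}^{1,N}[I](\eta_{\omega,E}^{1,N}[I]-1)\}$ and killed by $[wM]$ exactly as before, giving $\limsup_N b_N(I,E) = p_1^*(I)$.

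The hardest step is the resolvent comparison used in the first technical step: with only the weak Wegner bound available, the extra factor of $\sqrt{W} = N^{\alpha/2}$ and the $N^{2\alpha}$ localization scale must both be absorbed by a block length $N^\beta$ with $\beta < 1 - \alpha$, and it is exactly this arithmetic that forces $\alpha < \tfrac{1}{3}$. Absent a strong Wegner estimate, one cannot prove a priori that $\limsup_N N_\beta \P\{\eta_{\omega,E}^{1,N}[I] = 1\}$ is positive and finite, which is why this quantity must be imposed as a hypothesis rather than derived.
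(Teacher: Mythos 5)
Your proposal follows essentially the same route as the paper: the block decomposition with $2\alpha<\beta<1-\alpha$ (hence $\alpha<\tfrac{1}{3}$), the localization-based replacement of $\xi_{\omega,E}^N[I]$ by the independent superposition $\sum_p\eta_{\omega,E}^{p,N}[I]$, the characteristic-function expansion in which the weak Wegner and weak Minami estimates remove the quadratic error and the multiple-occupancy terms through the bound $O(N^{\alpha+\beta-1})\to 0$, and the identification $\limsup_N b_N(I,E)=\limsup_N N_\beta\,\P\{\eta_{\omega,E}^{1,N}[I]=1\}$ via $[wM]$ and $[H2]$. The only cosmetic difference is that the paper carries out the $\xi$--$\zeta$ comparison following Minami, testing against $\Im(x-z)^{-1}$ and estimating the boundary resolvent terms directly with the fractional-moment bound $[H2]$ (decay on scale $N^{2\alpha}\ll N^\beta$), rather than through a Helffer--Sj\"ostrand/Combes--Thomas reduction, but the mechanism and the resulting constraints on $\alpha,\beta$ are the same.
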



\subsection{Brief outline of the proof}\label{subsec:outline1}

The localization hypothesis $[H2]$ is used to recast the problem in terms of an array of independent random variables. 
As usual, we divide the set $\{ -N, -N+1, \ldots, -1, 0, 1, \ldots, N-1, N \}$ into subsets of length $2 \lfloor N^\beta \rfloor + 1$, for $0< \alpha < \beta < 1$. We always assume $2N+1$ is  divisible by $2\lfloor N^\beta \rfloor$. We label each subset of size $2 \lfloor N^\beta \rfloor + 1$-points by $p = 1, 2, \ldots, N_\beta$, where $N_\beta := (2N+1)(2\lfloor N^\beta \rfloor+1)^{-1}$. 

We associate a RBM $H_L^{p,N}$, of width $W = 2 \lfloor N^\alpha \rfloor +1$, for each such $p$. Using the eigenvalues of $H_L^{p,N}$ we construct the local eigenvalue statistics $\eta_{\omega,E}^{N,p}$ as in \eqref{eq:les1} using the scaling by $N$. The process $\zeta_{\omega,E}^N$ is a superposition of independent processes $\eta_{\omega,E}^{N,p}$.
We assume that $\alpha < \frac{1}{2}$. If the weak Minami estimate is used, we further assume that 
$\alpha + \beta < 1$. 
 
The proof consists of the following steps. These steps are an adaptation of the arguments of \cite{hk1} to the RBM models. We fix a bounded interval $I \subset \R$ and $E \in (-2,2)$. 

\begin{enumerate}

\item The localization bound $[H2]$ implies that the family of random variable $\zeta_{\omega,E}^N [I] = \sum_{p} \eta_{\omega,E}^{N,p} [I]$ has the same limit points as $\xi_{\omega,E}^N [I]$. As a consequence, the limit points of $\zeta_{\omega,E} [I]$ and $\xi_{\omega,E} [I]$ are infinitely-divisible random variables. These are described by their characteristic functions.     

\item The Minami estimate, either weak  $[H1w]$ or strong $[H1s]$, guarantees that the distributions of the limit points of the local random variables $\zeta_{\omega,E}^N [I]$ have no double points. This determines the form of the characteristic exponents. If the associated intensity is non-zero, then the limit points are Poisson distributed. 

\item The strong Wegner estimate $[H1s]$ guarantees that some of the limit points of 
$\zeta_{\omega,E}^N [I]$, and consequently of $\xi_{\omega,E}^N [I]$, are Poisson distributed with positive intensity. 


\end{enumerate}


\subsection{Contents}\label{subsec:contents}

In section \ref{sec:limit_pts1}, we describe the characteristic functions associated with the random variables 
$\zeta_{\omega,E}^N [I]$. We use the Wegner and Minami estimates in order to describe the form of the characteristic exponent. The corresponding intensity of the distribution is studied in detail in section \ref{sec:intensity1}. We prove that the intensity is positive for the distribution of at least some of the limit points, establishing Theorem \ref{thm:main1}. Section \ref{sec:limitPP1} presents the main steps of the proof of the equality of the limit points of  $\zeta_{\omega,E} [I]$ and $\xi_{\omega,E} [I]$. 
 

\section{Properties of the characteristic functions of $\zeta_{\omega,E}^{N}[I]$}\label{sec:limit_pts1}

\setcounter{equation}{0}

We follow the approach of \cite[section 5]{hk1} in order to obtain an expression for the characteristic function of the limit point random variables corresponding to $\zeta_{\omega,E}^N[I]$. We recall from section \ref{subsec:outline1} that the local process $\zeta_{\omega,E}^N$ is a superposition of independent processes $\eta_{\omega,E}^{N,p}$, for $p=1, \ldots, N_\beta$. The characteristic exponent $\Psi_N(t)$ of the random variable $\zeta_{\omega,E}^{N}[I]$ is defined by
\beq\label{eq:charactExp1}
\E \{ e^{it \zeta_{\omega,E}^N[I]} \} =: e^{\Psi_N (t)}.
\eeq
The characteristic function has the form
\bea\label{eq:charact1}
\E \{ e^{it \zeta_{\omega,E}^N[I]} \} & = & \Pi_{p=1}^{N_\beta} ~ \E \{ e^{it \eta_{\omega,E}^{p,N}[I]} \} \nonumber \\
  &=& e^{ \sum_{p=1}^{N_\beta} \log \E \{ e^{it \eta_{\omega,E}^{p,N}[I]} \} } ,
  \eea
  where $N_\beta := (2N+1) (2 \lfloor N^\beta \rfloor + 1)^{-1} \in \N$, for $0 < \alpha < \frac{1}{2}$ and $0 < \alpha < \beta < 1$. 
We expand the logarithm as 
\bea\label{eq:charact2}
 \left| \log \left[ \E \{ e^{it \eta_{\omega,E}^{p,N}[I]} -1 \} + 1 \right] \right| & = & \left|  \E \{ e^{it \eta_{\omega,E}^{p,N}[I]} -1 \}  \right|  \nonumber \\
  & & + \mathcal{O} \left( \left|  \E \{ e^{it \eta_{\omega,E}^{p,N}[I]} \} -1 \right|^2 \right) .
\eea
There are two possible estimates:
\begin{itemize}
\item The weak Wegner estimate $[H1w]$ implies that 
\bea\label{eq:charact3w}
\left|  \E \{ e^{it \eta_{\omega,E}^{p,N}[I]} -1 \}  \right| & \leq  & t \E \{ \eta_{\omega,E}^{p,N}[I] \} \nonumber \\ 
 & \leq & t N^{\frac{\alpha}{2} + \beta - 1} .
 \eea
 which vanishes as $N \rightarrow \infty$ under the condition $\alpha + \beta < 1$. This also justifies the expansion \eqref{eq:charact2} as
 \beq\label{eq:charact4w}
 \sum_{p=1}^{N_\beta} \left|  \E \{ e^{it \eta_{\omega,E}^{p,N}[I]} \} -1 \right|^2  \leq \frac{N}{N^\beta} \cdot \frac{N^{\alpha + 2\beta}}{N^2} = \frac{N^{\alpha + \beta}}{N},
 \eeq
 which also vanishes. 
  \item The strong Wegner estimate $[H1s]$ implies that 
\bea\label{eq:charact3s}
\left|  \E \{ e^{it \eta_{\omega,E}^{p,N}[I]} -1 \}  \right| & \leq  & t \E \{ \eta_{\omega,E}^{p,N}[I] \} \nonumber \\ 
 & \leq & t N^{\beta - 1} .
 \eea
 which vanishes as $N \rightarrow \infty$ under the condition $0 < \beta < 1$. This also justifies the expansion \eqref{eq:charact2} as
 \beq\label{eq:charact4s}
 \sum_{p=1}^{N_\beta} \left|  \E \{ e^{it \eta_{\omega,E}^{p,N}[I]} \} -1 \right|^2  \leq \frac{N}{N^\beta} \cdot \frac{N^{2\beta}}{N^2} = \frac{N^{\beta}}{N},
 \eeq
 which also vanishes. 
  \end{itemize}

Consequently, in either case, we can write the characteristic function as
\beq\label{eq:charact5}
\E \{ e^{it \zeta_{\omega,E}^N[I]} \} = e^{ \sum_{p=1}^{N_\beta} \E \{  e^{it \eta_{\omega,E}^{p,N}[I]} -1  \}  } ,
\eeq
up to vanishing terms.  Because of this, and the homogeneity in $p$, we may assume that the characteristic exponent $\Psi_N(t)$ of $\zeta_{\omega,E}^N[I]$ has the form
\beq\label{eq:charactExp2}
\Psi_N(t)  = N_\beta \E \{ ( e^{it \eta_{\omega,E}^{1,N}[I]} -1 ) \}   . 
\eeq

 

To complete the analysis of the limiting characteristic exponent, we write 
\bea\label{eq:charactExp4}
\Psi_N(t) &=& N_\beta  \E \{ ( e^{it \eta_{\omega,E}^{1,N}[I]} -1 ) \}  \nonumber \\
 &=& N_\beta \sum_{j=1}^\infty ( e^{itj} -1) \P \{ \eta_{\omega,E}^{1,N}[I]  = j  \} .
 \eea
We note that the general result, Theorem \ref{thm:main0}, requiring only the weak 
Wegner estimate of $[H1w]$, follows from this expression. Furthermore, the conditions that guarantee the vanishing of the expression in \eqref{eq:charact4w}, that is, $0 \leq \alpha + \beta <1$ and $\alpha < \beta$, require that $\alpha < \frac{1}{2}$. This shows that $\alpha < \frac{1}{2}$ is a natural condition for the limit points to be described by a Poisson distribution. 
 
 Proceeding with the proof of Theorem \ref{thm:main1}, we use the Minami estimates of $[H1]$. Writing 
\beq\label{eq:charactExp5}
  \sum_{j=1}^\infty ( e^{itj} -1) \P \{ \eta_{\omega,E}^{1,N}[I]  = j  \} =
    ( e^{it} -1) \P \{ \eta_{\omega,E}^{1,N}[I]  = 1  \} + \sum_{j=2}^\infty ( e^{itj} -1) \P \{ \eta_{\omega,E}^{1,N}[I]  = j  \}  ,
    \eeq
    the contribution in \eqref{eq:charactExp4} coming from the second term in \eqref{eq:charactExp5} may be bounded 
    \begin{itemize}
 \item Using the weak Minami estimate $[H1w]$,     
   \bea\label{eq:charactExp6}
 N_\beta \left|     \E \{ ( e^{it \eta_{\omega,E}^{1,N}[I]} -1 ) \chi_{\eta_{\omega,E}^{1,N}[I] \geq 2} \} \right| & \leq & 2 N_\beta \P \{ \eta_{\omega,E}^{1,N}[I] \geq 2 \} \nonumber \\
  & \leq & 2 |I|^2 N^{\alpha + \beta - 1},
  \eea
which vanishes as $N \rightarrow \infty$ as $\alpha + \beta < 1$, or
\item Using the strong Minami estimate $[H1s]$,
 \bea\label{eq:charactExp7}
 N_\beta \left|     \E \{ ( e^{it \eta_{\omega,E}^{1,N}[I]} -1 ) \chi_{\eta_{\omega,E}^{1,N}[I] \geq 2} \} \right| & \leq & 2 N_\beta \P \{ \eta_{\omega,E}^{1,N}[I] \geq 2 \} \nonumber \\
  & \leq & 2 |I|^2 N^{\beta - 1},
  \eea
which vanishes as $N \rightarrow \infty$ as $\beta < 1$.
\end{itemize}
Consequently, in either case we have
\beq\label{eq:charactExp8}
\lim_{N \rightarrow \infty} \Psi_N(t) =     ( e^{it} -1) p_1(I),
\eeq
where 
\beq\label{eq:weight1}
p_1(I) := \lim_{N \rightarrow \infty} N_\beta \P \{ \eta_{\omega,E}^{1,N}[I]  = 1  \}    ,
\eeq
provided the limit exists. The existence of this limit will be studied in the next section.

\section{Intensity of the distribution of the limit points of $\zeta_{\omega,E}^{N}[I]$}\label{sec:intensity1}

\setcounter{equation}{0}

The main result of this section is the calculation of the intensity of the limiting Poisson distribution for the limit points of the random variables $\xi_{\omega,E}^N [I]$, for any interval $I \subset \R$. We begin with two lemmas. As discussed in section \ref{sec:rbm1}, the calculation of the limit:
\beq\label{eq:localExp01}
p_1(I) = 
\lim_{N \rightarrow \infty} \E \left\{  {\rm Tr} P_{H_L^N} \left( \frac{1}{N} I + E \right)  \right\} ,
\eeq
is essential for proving the convergence of the local point process  $\xi_{\omega,E}^N$ to a Poisson point process. Although we do not calculate this limit here, we prove the existence of positive limit points of the sequence defined on the right side of \eqref{eq:localExp01}.  

To relate this calculation to the result \eqref{eq:weight1} of section \ref{sec:limit_pts1}, we note that the analog of \eqref{eq:localExp01} for the array of random variables $\{ \eta_{\omega,E}^{p,N} [I] \}$ is given by 
\beq\label{eq:localExpArray0}
\lim_{N \rightarrow \infty} \sum_{p=1}^{N_\beta} \E \left\{  {\rm Tr} P_{H_L^{p,N}} \left( \frac{1}{N} I + E \right)  \right\} .
\eeq
The weak Minami estimate $[H1w]$ implies the following :
\beq\label{eq:localExpArray1}
  \sum_{p=1}^{N_\beta} \E \left\{  {\rm Tr} P_{H_L^{p,N}} \left( \frac{1}{N} I + E \right)  \right\} 
= N_\beta \P \{ \eta_{\omega,E}^{1,N} [I] =1 \} + \mathcal{O}(N^{\alpha+ \beta - 1}) ,
\eeq
whereas the strong Minami estimate $[H1s]$ yields $ \mathcal{O}(N^{\beta - 1}) $, so we see that the limit in \eqref{eq:weight1} is equivalent to the limit in \eqref{eq:localExpArray0}. The localization hypothesis $[H2]$ implies in turn that the limits in \eqref{eq:localExpArray0} and \eqref{eq:localExp01} are the same, if they exist, and that they have the same limit points.

\begin{lemma}\label{lemma:intensity1}
For any bounded interval $I \subset \R$, and $E \in (-2,2)$, we define 
\beq\label{eq:localExp1}
b_N(I,E) := \E \left\{  {\rm Tr} P_{H_L^N} \left( \frac{1}{N} I + E \right)  \right\}.
\eeq
Then, for any interval $J \subset (-2,2)$, we have
\beq\label{eq:intensity1}
\lim_{N \rightarrow \infty}  \int_J b_N(I,E) ~dE = |I| N_{sc}(J) > 0   ,
\eeq
where $N_{sc}$ is the semi-circle DOSm defined in \eqref{eq:scl3}.
\end{lemma}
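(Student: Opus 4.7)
\emph{Proof plan.} The strategy is to exchange the order of integration by Fubini, recognize the resulting quantity in terms of the global empirical density of states at scale $N$, and then combine the weak convergence of this density to the semicircle law (which is classical and quoted in the text from \cite{bmp, mpk}) with the Wegner estimate to control a vanishing boundary correction.

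Concretely, first I would interchange the expectation, trace, and $E$-integration to obtain
\begin{equation*}
\int_J b_N(I,E)\, dE = \E\!\left\{\sum_{j=-N}^N \int_J \chi_{\frac{1}{N}I+E}\bigl(E_j^N(\omega)\bigr)\,dE\right\}.
\end{equation*}
A change of variable shows that $\int_J \chi_{\frac{1}{N}I+E}(\lambda)\,dE = |J\cap(\lambda - \frac{1}{N}I)|$, which equals $|I|/N$ whenever $\lambda$ lies at distance more than $|I|/N$ from $\partial J$, and is at most $|I|/N$ always. Splitting $J=J_N^\circ\cup J_N^\partial$ into interior and boundary parts, this gives
\begin{equation*}
\int_J b_N(I,E)\, dE = \frac{|I|}{N}\,\E\bigl\{\mathrm{Tr}\,\chi_{J_N^\circ}(H_L^N)\bigr\} + R_N,\qquad |R_N|\leq \frac{|I|}{N}\,\E\bigl\{\mathrm{Tr}\,\chi_{J_N^\partial}(H_L^N)\bigr\}.
\end{equation*}

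Second, to handle the main term I would invoke the semicircle law \eqref{eq:scl1}, which in the form $N_N(A):=(2N+1)^{-1}\E\{\mathrm{Tr}\,\chi_A(H_L^N)\}\to N_{sc}(A)$ holds for intervals $A\subset(-2,2)$ because $N_{sc}$ is absolutely continuous with bounded density. Since $|J\setminus J_N^\circ|=O(1/N)$, one has $N_N(J_N^\circ)\to N_{sc}(J)$, hence the main term converges to $|I|\,N_{sc}(J)$ (up to the trivial $(2N+1)/N$ normalization factor). Positivity follows immediately from $J\subset(-2,2)$ having nonempty interior and from the density $n_{sc}$ being strictly positive on $(-2,2)$.

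Finally, for the boundary error $R_N$, I would apply the Wegner estimate: by \eqref{eq:wegner_weak1} (and $|J_N^\partial|\leq 2|I|/N$),
\begin{equation*}
\E\bigl\{\mathrm{Tr}\,\chi_{J_N^\partial}(H_L^N)\bigr\}\leq C_0\,(2|I|/N)\sqrt{W}\,(2N+1)= \mathcal O\bigl(|I|\sqrt{W}\bigr),
\end{equation*}
so that $|R_N|=\mathcal O(|I|^2\sqrt{W}/N)=\mathcal O(N^{\alpha/2-1})\to 0$ for any $\alpha<2$. The only step that requires any genuine care is the transfer from weak convergence of $N_N$ to $N_{sc}$ against bounded measurable sets to convergence on the slightly $N$-dependent interval $J_N^\circ$; this is routine since $N_{sc}$ has a continuous density. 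No step appears to present a serious obstacle, so the main point of the lemma is really that $\int_J b_N(I,E)\,dE$ admits a clean Fubini reduction to a density-of-states computation, which is precisely why this spatially averaged statement is provable while the pointwise limit of $b_N(I,E)$ remains open.
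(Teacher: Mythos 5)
Your proposal is correct and rests on the same two inputs as the paper---the convergence of the averaged eigenvalue counting measure to the semicircle law (\eqref{eq:weakDOS1}, quoted from \cite{mpk}) and a Wegner estimate---but the execution is genuinely different. The paper first uses the Wegner estimate to introduce the local density of states function $n_N$, writes $b_N(I,E)=\int_I n_N(x/N+E)\,dx$, exchanges the $E$ and $x$ integrations, proves the pointwise limit $b_N(x,J)\to N_{sc}(J)$ by a $\pm\epsilon$ sandwich around $J$, establishes uniform boundedness the same way, and concludes by dominated convergence. You instead apply Tonelli directly to the spectral sum via the exact identity $\int_J\chi_{\frac{1}{N}I+E}(\lambda)\,dE=|J\cap(\lambda-\frac{1}{N}I)|$, which reduces $\int_J b_N(I,E)\,dE$ to $\frac{|I|}{N}\,\E\{\mathrm{Tr}\,\chi_{J_N^\circ}(H_L^N)\}$ plus a boundary remainder; the semicircle law gives the main term and the Wegner estimate kills the remainder with an explicit rate $\mathcal{O}(N^{\alpha/2-1})$. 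This avoids introducing $n_N$ and dominated convergence altogether, and for the $N$-dependent interval $J_N^\circ$ the ``routine'' transfer you mention is exactly the paper's $\pm\epsilon$ sandwich (monotonicity plus continuity of $N_{sc}$), so no new idea is needed there. Two small repairs: (i) eigenvalues lying just \emph{outside} $J$, within distance $\mathcal{O}(|I|/N)$ of $\partial J$, also contribute to the remainder, so $J_N^\partial$ must be taken as a two-sided strip around $\partial J$ rather than a subset of $J$; its measure is still $\mathcal{O}(|I|/N)$, so your Wegner bound is unaffected. (ii) Your explicit normalization factor $(2N+1)/N$ tends to $2$, not $1$, so a literal reading of your computation yields $2|I|N_{sc}(J)$; however, the paper's own proof elides the same factor by writing $b_N(I,E)=N\mu_N(\frac{1}{N}I+E)$ even though ${\rm Tr}=(2N+1)\mu_N$, so this is a normalization convention of the paper (the scaling by $N$ versus the matrix size $2N+1$ in \eqref{eq:les1}) rather than a defect of your argument.
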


\begin{proof}

\noindent
1. The local density of states measure $\mu_N$ ($\ell$DOSm) is defined by 
\beq\label{eq:lDOSf1}
\mu_N(I) := \frac{1}{2N+1}  \E \left\{  {\rm Tr} P_{H_L^N} (I )  \right\} ,
\eeq
for measurable subsets $I \subset \R$. The Wegner estimate $[H1]$ implies that $\mu_N$ is absolutely continuous with respect to Lebesgue measure and its density $n_N$, the  local density of states function ($\ell$DOSf), satisfies
\beq\label{eq:ldosf1}
 \mu_N (I) = \int_I ~n_N(s) ~ds .
 \eeq 
By a change of variables, we write $b_N$, defined in \eqref{eq:localExp1}, in terms of the $\ell$DOSf:
\beq\label{eq:localExp2}
b_N(I,E) = N \mu_N \left( \frac{1}{N} I + E \right) = \int_I ~n_N \left( \frac{x}{N} + E \right) ~dx .
\eeq

\noindent
2. We choose any interval $J \subset (-2,2)$ and integrate $b_N$ over $J$:
\beq\label{eq:ave1}
\int_J ~b_N(I,E) ~dE = \int_J ~dE ~\int_I ~dx  ~n_N \left( \frac{x}{N} + E \right) .
\eeq
Since $n_N$ is smooth, and the integrals are over bounded sets, the order of integration may be exchanged and we define
\beq\label{eq:ave2}
b_N (x,J) := \int_J  ~n_N \left( \frac{x}{N} + E \right) ~dE .
\eeq
We now study the limit of $b_N(x,J)$ as $N \rightarrow \infty$. It follows from the work of \cite{mpk} that for $0 < \alpha \leq 1$,
\beq\label{eq:weakDOS1}
\lim_{N \rightarrow \infty} \mu_N(J) = N_{sc}(J).
\eeq
(For the $\alpha = 0$ case, there is an $\mathcal{O}(W^{-1})$-correction to the semi-circle law.)

\noindent
3. Given any $\epsilon > 0$, for any $0 < M < \infty$, there exists $N_{\epsilon,M}$ so that 
for any $N > N_{\epsilon, M}$, we have $|x / N| < \epsilon$, for any $x \in [-M, M]$. For $J = [c,d]$, and for any $x \in [-M,M]$, a change of variables in \eqref{eq:ave2} results in the bounds 
\beq\label{eq:inequal1}
\int_{c+ \epsilon}^{d - \epsilon} ~ n_N(s) ~ds \leq b_N(x,J) = \int_{\frac{x}{N} + J} ~n_N(s) ~ds \leq
\int_{c - \epsilon}^{d + \epsilon} ~ n_N(s) ~ds .
\eeq
It follows from \eqref{eq:weakDOS1}
that
\beq\label{eq:inequal2}
\lim_{N \rightarrow \infty} \int_{c+ \epsilon}^{d - \epsilon} ~ n_N(s) ~ds = N_{sc}([c +\epsilon, d - \epsilon]), 
 \eeq
 and similarly for the upper bound in \eqref{eq:inequal1}.
 Consequently,  for any $x \in [-M, M]$, relations \eqref{eq:weakDOS1}-\eqref{eq:inequal2} imply that
\beq\label{eq:inequal3}
 N_{sc}([c +\epsilon, d - \epsilon]) \leq \liminf_{N \rightarrow \infty} b_N(x,J) \leq \limsup_{N \rightarrow \infty} b_N(x,J)  \leq N_{sc}([c -\epsilon, d + \epsilon]) .
\eeq
Hence, since \eqref{eq:inequal3} holds for any $\epsilon > 0$,  we have the pointwise limit
\beq\label{eq:ptw1}
\lim_{N \rightarrow \infty} b_N (x,J) = N_{sc}(J),  
\eeq
for any $x \in [-M, M]$. 

\noindent
4. We next prove that the set of function $\{ b_N(x,J) ~|~ ~ x \in [-M, M] \}$ is uniformly bounded in $N$. 
As follows from \eqref{eq:inequal1}, that for $N > N_{\epsilon,M}$,
\beq\label{eq:unifBdd1}
\int_{c+ \epsilon}^{d - \epsilon} ~ n_N(s) ~ds
\leq  \inf_{x \in [-M,M]} b_N(x,J) \leq 
\sup_{x \in [-M,M]} b_N(x,J) \leq \int_{c - \epsilon}^{d + \epsilon} ~ n_N(s) ~ds .
\eeq
As above, computing liminf and limsup over $N  > N_{\epsilon,M}$, we obtain
\bea\label{eq:unifBdd2}
N_{sc} ([ {c+ \epsilon}, {d - \epsilon}]) & \leq &  \liminf_N  \left\{ \sup_{x \in [-M,M]} b_N(x,J) \right\} \nonumber \\
 &\leq & 
\limsup_N \left\{ \sup_{x \in [-M,M]} b_N(x,J)  \right\} \nonumber \\
  & \leq  &  N_{sc}( [{c - \epsilon}, {d + \epsilon} ])  , 
\eea
and similarly for $\inf_{x \in [-M,M]} b_N(x,J)$.
Since \eqref{eq:unifBdd2} holds for all $\epsilon > 0$, we obtain the result
\beq\label{eq:unifBdd3}
\lim_{N \rightarrow \infty} \left\{ \sup_{x \in [-M,M]} b_N(x,J) \right\} = N_{sc}(J). 
\eeq
The uniform boundedness of of the set $\{ b_N(x,J) ~|~ ~ x \in [-M, M] \}$ follows from this.

\noindent
5. A consequence of the pointwise convergence \eqref{eq:ptw1} and the uniform boundedness of 
$\{ b_N(x,J) ~|~ ~ x \in K \}$, for any compact subset $K \subset \R$, is that for any bounded interval $I \subset \R$, the Lebesgue Dominated Convergence Theorem implies that 
\beq\label{eq:main_limit1}
\lim_{N \rightarrow \infty} \int_I dx \int_J  dE ~n_N \left( \frac{x}{N} + E \right) = |I| N_{sc}(J).
\eeq
From \eqref{eq:ave1}, this means that
\beq\label{eq:ave3}
\lim_{N \rightarrow \infty} \int_J ~b_N(I,E) ~dE = |I| N_{sc}(J) = \int_I dx \int_J  dE ~n_{sc}(E).
\eeq
\end{proof}


In order to prove the existence of subsequences $\{ N_k \}$ so that $b_{N_k}(I,E)$ has a positive limit, we need control over the local density of states function $n_N(E)$, In the proof of the following lemma, we use the strong Wegner estimate $[H1s]$. This is the only place where this strong estimate is used. 

\begin{lemma}\label{lemma:subseq1}
Assume the strong Wegner estimate of $[H1s]$. For almost every $E \in (-2,2)$, depending on $I$, there exists a sequence $N_k(E) \rightarrow \infty$
so that 
\beq\label{eq:b_Nconv1}
\lim_{k \rightarrow \infty} b_{N_k(E)} (I,E) =: h(I,E) > 0.
\eeq
\end{lemma}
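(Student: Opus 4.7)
The plan is to combine Lemma \ref{lemma:intensity1} with the uniform pointwise bound on $b_N(I,E)$ furnished by the strong Wegner estimate $[H1s]$, and then to extract, via reverse Fatou and Lebesgue differentiation, pointwise positivity of the limsup almost everywhere in $(-2,2)$. First, I would apply $[H1s]$ to the interval $\tfrac{1}{N} I + E$ to obtain
$$
b_N(I,E) = \E \{ {\rm Tr}\, P_{H_L^N} ( \tfrac{1}{N} I + E ) \} \leq C_0 \cdot \tfrac{|I|}{N} \cdot N = C_0 |I|,
$$
uniformly in $N \in \N$ and $E \in \R$. This is precisely the place where the strong Wegner estimate is indispensable; the weak Wegner estimate would yield only an $N^{\alpha/2}$-dependent bound and thus would not furnish an integrable dominator.

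Next, set $F(E) := \limsup_{N \to \infty} b_N(I,E)$, which is Borel measurable and bounded by $C_0 |I|$. The uniform bound allows an application of the reverse Fatou lemma on any bounded interval $J \subset (-2,2)$:
$$
\int_J F(E) \, dE \ \geq\ \limsup_{N \to \infty} \int_J b_N(I,E) \, dE \ =\ |I|\, N_{sc}(J),
$$
where the last equality uses Lemma \ref{lemma:intensity1}. Rewriting, this says
$$
\int_J \bigl( F(E) - |I|\, n_{sc}(E) \bigr)\, dE \ \geq\ 0 \qquad \text{for every interval } J \subset (-2,2).
$$

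Third, since the signed measure $\bigl(F(E) - |I|\, n_{sc}(E)\bigr)\, dE$ assigns nonnegative mass to every subinterval of $(-2,2)$, the Lebesgue differentiation theorem (applied at Lebesgue points of $F - |I| n_{sc}$, which cover $(-2,2)$ up to a null set) yields
$$
F(E) \ \geq\ |I|\, n_{sc}(E) \ >\ 0 \qquad \text{for a.e. } E \in (-2,2).
$$
For each such $E$, by the very definition of limsup, there exists a subsequence $\{N_k(E)\}$ with $N_k(E) \to \infty$ along which $b_{N_k(E)}(I,E) \to F(E)$. Setting $h(I,E) := F(E)$ completes the argument.

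The only real subtlety is the reverse Fatou step, which crucially depends on the $N$-independent dominator $C_0|I|$ supplied by $[H1s]$; measurability of $F$ and the extraction of a realizing subsequence at each a.e. $E$ are standard. All remaining considerations, including the passage from the interval-integrated inequality to pointwise a.e.\ positivity, are routine consequences of Lebesgue differentiation.
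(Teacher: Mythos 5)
Your proof is correct and follows the same basic route as the paper: the strong Wegner estimate gives the $N$-independent bound $b_N(I,E)\leq C_0|I|$ (the paper gets this from $\|n_N\|_\infty\leq C_0$, you get it by applying $[H1s]$ directly to the interval $\tfrac1N I+E$ -- equivalent), and this dominator plus Lemma \ref{lemma:intensity1} feeds a reverse Fatou argument. The one genuine difference is the last step. The paper argues by contradiction: if $\limsup_N b_N(I,E)$ vanished (on an interval $J$), reverse Fatou would contradict $\int_J b_N(I,E)\,dE \to |I|N_{sc}(J)>0$; as literally phrased this only rules out vanishing of the $\limsup$ on whole intervals, and the ``almost every $E$'' conclusion is asserted somewhat loosely. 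Your version instead fixes $F(E):=\limsup_N b_N(I,E)$, deduces $\int_J F \geq |I|N_{sc}(J)$ for every subinterval $J\subset(-2,2)$, and then invokes Lebesgue differentiation to get the pointwise bound $F(E)\geq |I|\,n_{sc}(E)>0$ almost everywhere. This is a cleaner and strictly stronger conclusion (an explicit a.e.\ lower bound on the limit points, not just positivity), and it closes the small gap between ``positive on every interval in an integrated sense'' and ``positive for a.e.\ $E$'' that the paper's contradiction argument leaves implicit. The extraction of the realizing subsequence $N_k(E)$ and the finiteness $F(E)\leq C_0|I|$ are handled the same way in both arguments.
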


\begin{proof}
By the strong Wegner estimate of $[H2]$, it follows that there exists a finite $C_0 > 0$ so that $\| n_N \|_\infty \leq C_0$, for all integers $N > 0$. As a consequence, for  all $E \in (-2, 2)$, we have 
\bea\label{eq:Nconv2}
 b_N(I,E) & = & \int_I ~n_N \left( \frac{x}{N} + E \right)  \nonumber \\
  & = & N \int_{ \frac{I}{N} + E } ~n_N(u) ~du \nonumber \\
  & \leq & C_0 |I|.
  \eea
We proved in Lemma \ref{lemma:intensity1}, \eqref{eq:intensity1}, that for any interval $J \subset (-2,2)$, we have 
\beq\label{eq:intensity2}
\lim_{N \rightarrow \infty} \int_J ~dE ~b_N(I,E) = N_{sc}(J) |I|  >  0 .
\eeq
We now suppose that $\limsup_N b_N(I,E) = 0$, for any interval $J \subset (-2,2)$. 
Applying the reverse Fatou inequality to \eqref{eq:intensity2}, we obtain
\beq\label{eq:Ncov4}
0 < N_{sc}(J) |I| = \limsup_{N}  \int_J ~dE ~b_N(I,E) \leq \int_J ~dE ~ \limsup_N b_N(I,E) =0,
\eeq
a contradiction. Hence, $\limsup_N b_N(I,E) > 0$ and finite, for almost every $E \in (-2,2)$, and there exists a subsequence so that \eqref{eq:b_Nconv1} holds. 
\end{proof}

Hence, each finite, positive limit point $h(I,E) = \limsup_\N b_N(I,E)$ is the intensity of the Poisson distribution of a limit point of the set of random variables $\{ \xi_{\omega,E}^{N_k(E)}[I] ~|~ N \in \N \}$ for almost every $E \in (-2, 2)$. 


\section{Localization: Equality of the limit points of $\zeta_{\omega,E}^N[I]$ and $\xi_{\omega,E}^N[I]$}\label{sec:limitPP1}

\setcounter{equation}{0}

We sketch the proof of the key result of localization
\beq\label{eq:loc1}
\lim_{N \rightarrow \infty} ~\E \{ \xi_{\omega,E}^N[f]  - \zeta_{\omega,E}^N[f] \}  = 0,
\eeq
for real test function $f$. Following Minami \cite[section 2]{minami}, it suffices to prove \eqref{eq:loc1} for functions $f(x) = \Im (x-z)^{-1}$, for $\Im z > 0$. This leads to the consideration of the imaginary parts of the Green's functions $R_N(z) := (H_L^N - z)^{-1}$ and $R_{p,N}(z) := (H_L^{p,N} - z)^{-1}$. 

As above, we construct an array of independent point processes as follows. We choose $0 < \alpha < \beta < 1$, with $0 < \alpha <   \frac{1}{2}$, and, if the weak Minami estimate is used, $\alpha + \beta < 1$. We partition the set of integers $\langle-N,N\rangle := [-N,N] \cap \Z$ into non-overlapping ordered subsets $I_{\beta,N}(p)$
containing $2 \lfloor N^\beta \rfloor + 1$ points: 
\bea\label{eq:setDecomp1}
\langle -N,N \rangle & = & \bigcup_{p=1}^{N_\beta}   I_{\beta, N}(p)  \nonumber \\
 &=&  \langle  -N , -N +  ( 2 \lfloor N^\beta \rfloor)  \rangle  \cup  \langle  -N +(2 \lfloor N^\beta \rfloor) +1, -N + 2 ( 2 \lfloor N^\beta \rfloor)  \rangle \nonumber \\
  & &  
\bigcup_{p=3}^{N_\beta}    \langle  -N + (p-1)(2 \lfloor N^\beta \rfloor) + 1, -N + p ( 2 \lfloor N^\beta \rfloor)  \rangle
\eea
and where $N_\beta := \frac{ 2N+1}{2 \lfloor N^\beta \rfloor +1}$, assumed to be an integer,  is the number of these disjoint subsets.
The local eigenvalue point process associated with the local RBM $H_L^{p,N}$ and the subset  $I_{\beta,N}(p)$ is denoted by $\eta_{\omega,E}^{p,N}$. 

We make the following definitions:
\begin{itemize}
\item The \emph{end points} of the ordered set $I_{\beta,N}(p)$ are $I^\pm_{\beta, N}(p)$, with $I^-_{\beta,N}(p) < I^+_{\beta,N}(I)$. 

\item The \emph{boundary} of $I_{\beta,N}(p)$ is defined by $\partial I_{\beta,N}(p) := \{ j \in I_{\beta,N}(p)  ~|~ {\rm dist}(j, I_{\beta,N}(p)^\pm) \leq N^\alpha \}$. 

\item The \emph{interior} of $I_{\beta,N}(p)$ is defined by 
${\rm Int} I_{\beta,N}(p) := \{ j \in I_{\beta,N}(p)  ~|~ {\rm dist}(j, \partial I_{\beta,N}(p) > N^{\mu \alpha} \log N^\delta) \}$, where $ \delta > 0$ will be chosen below.

\end{itemize}
\noindent
Note that $| {\rm Int} I_{\beta,N}(p)| = \mathcal{O}(N^\beta)$, and $| \partial I_{\beta,N}(p) | = \mathcal{O}(N^\alpha)$, and 
$$
{\rm dist}( {\rm Int}I_{\beta,N}(p),  \partial I_{\beta,N}(p) ) = \mathcal{O}( N^{\mu \alpha} \log N^\delta) $$. 
\noindent

\begin{remark}\label{remark:scales1}
The various scales determined by the exponents $\alpha, \beta, \mu$ satisfy the relations:
\begin{enumerate}
\item $0 \leq \alpha < \frac{1}{2}$, and $\alpha < \beta$ to insure that $N^\alpha < N^\beta$;
\item $0 < \alpha + \beta < 1$, using the weak Wegner estimate, see \eqref{eq:charact4w}, or $0 < \beta < 1$, using the strong Wegner estimate,  see \eqref{eq:charact4s};
\item $\alpha \mu < \beta <1$ to ensure exponential decay from \eqref{eq:locHypRBM1} and that $N^{\mu \alpha} \log N^\delta < N^\beta$.
\end{enumerate}
For the conjectured optimal value $\mu =2$, and working with only the weak assumptions $[H1w]$,
the conditions are $2 \alpha < \beta < 1$ and $\alpha + \beta < 1$ (see \eqref{eq:charactExp6}). These are satisfied if we require that $\alpha < \frac{1}{3}$ and $\frac{2}{3} < \beta < 1$. If we use the strong assumptions, we may take $\alpha < \frac{1}{2}$.
\end{remark}

With the choice of $f(x) = \Im (x-z)^{-1}$, for $\Im z > 0$, the difference in \eqref{eq:loc1} is bounded as
\beq\label{eq:green1}
\frac{1}{2N+1} \left| \Im {\rm Tr} R_N(z) - \sum_{p=1}^{N_\beta} \Im {\rm Tr} R_{N,p}(z) \right| \leq A_N(z) + B_N(z) ,
\eeq
where
\beq\label{eq:green2}
A_N(z) := \frac{1}{2N+1} \sum_{p=1}^{N_\beta} \left( \sum_{j \in I_{\beta,N}(p) \backslash {\rm Int} I_{\beta,N}(p)}
\left[ \Im G_N(j,j;z) +  \Im G_{N,p}(j,j;z) \right] \right) ,
\eeq
and
\beq\label{eq:green3}
B_N(z) := \frac{1}{2N+1} \sum_{p=1}^{N_\beta} \sum_{j \in {\rm Int} I_{p,N}(I) }
\left[ \sum_{\langle k, \ell \rangle \in \partial I_{\beta,N}(p)}  ~| G_{N,p}(j,k;z) | 
|\omega_{k \ell}| |G_{N,p}(\ell,j;z)|  \right].
\eeq

We estimate $A_N(z)$ using \emph{a priori} bounds on the matrix elements of the resolvents:
\bea\label{eq:est1}
\E \{ A_N(z) \}  & \leq  & \frac{1}{2N+1} \left( \frac{N}{N^\beta} \right) ( N^{\alpha \mu} \log N^\delta )
 \left[ \E \{ | \Im G_N(j,j;z) | \}  + \E \{ |\ \Im G_{N,p}(j,j;z)| \}  \right] \nonumber \\
  & = &  \mathcal{O} \left( \frac{N^{\alpha \mu} \log N^\delta}{N^\beta} \right) ,
 \eea 
which vanishes as $N \rightarrow \infty$.

Turning to the second term $B_N(z)$, we have 
\beq\label{eq:est2}
\E \{ B_N(z)^\frac{s}{2} \}   \leq   \frac{1}{2N+1} \sum_{p=1}^{N_\beta} \sum_{j \in {\rm Int} I_{p,N}(I) }
\left[ \sum_{\langle k, \ell \rangle \in \partial I_{\beta,N}(p)}  ~ \E \{ | G_{N,p}(j,k;z) |^\frac{s}{2} 
\left|\frac{\omega_{k \ell}}{N^\frac{\alpha}{2}} \right|^\frac{s}{2} |G_{N,p}(\ell,j;z)|^\frac{s}{2} \}  \right].
\eeq
We use the Cauchy-Schwarz inequality to bound the expectation:
\bea\label{eq:est3}
\E \{ | G_{N,p}(j,k;z) |^\frac{s}{2} 
|\omega_{k \ell}|^\frac{s}{2} |G_{N,p}(\ell,j;z)|^\frac{s}{2} \}  & \leq  & 
\E \{ | G_{N,p}(j,k;z) |^s \}^\frac{1}{2} \E \{ |\omega_{k \ell}|^s |G_{N,p}(\ell,j;z)|^s \}^\frac{1}{2} \}  \nonumber \\
  & \leq & \E \{ | G_{N,p}(j,k;z) |^s \}^\frac{1}{2} 
\E \{ |\omega_{k \ell}|^{2s} \}^\frac{1}{4} \E \{  |G_{N,p}(\ell,j;z)|^{2s} \}^\frac{1}{4} \}  \nonumber \\
  & \leq & N^{\frac{s \alpha}{2}} e^{- \kappa_{p,s}N^{- \alpha \mu} |j-k|} E \{ |\omega_{k \ell}|^{2s} \}^\frac{1}{4} \E \{  |G_{N,p}(\ell,j;z)|^{2s} \}^\frac{1}{4}.  \nonumber \\
   & & 
  \eea
We have assumed that the moments of the random variables are bounded, and it follows from \eqref{eq:locHypRBM1} that the resolvent satisfies the bound 
\beq\label{eq:est4}
\E \{  |G_{N,p}(\ell,j;z)|^{2s} \}^\frac{1}{4} \leq C_1 N^{s \sigma},
\eeq
for a constant $C_1 > 0$ independent of $N$ and $z \in \C^+$, and for $s < \frac{1}{2}$, so that we obtain
\bea\label{eq:est5}
\E \{ B_N(z)^\frac{s}{2} \} &  \leq &  C_1  N^{\frac{s \alpha}{4}+ s \sigma -\beta} \sum_{j \in {\rm Int} I_{\beta,N}(1) } \left( \sum_{\langle k, \ell \rangle \in \partial I_{\beta,N}(1)} e^{- \kappa_{1,s} \frac{|j-k|}{N^{ \alpha \mu} }}  \right)
\nonumber \\
 & \leq & C_1  N^{\frac{ \alpha}{8}+ \frac{\sigma}{2}-\beta} N^\beta N^{\alpha \mu} \left[ \frac{1}{N^{\kappa_{1,s} \delta}} - e^{- N^{\beta - \alpha \mu}} \right].
\eea
This vanishes as $N \rightarrow \infty$ provided we choose $\delta > \kappa_{1,s}^{-1} [\alpha  \left( \frac{1}{8} + \mu \right) + \frac{\sigma}{2}]$ and $\beta > \alpha \mu$. 
The remainder of the proof follows as in the proof of Minami in \cite[section 2]{minami}.


\end{document}